\newcommand{\MC}{\text{MC}}
\newcommand{\MV}{\text{MV}}
\newcommand{\SFMP}{\text{SF-MP}}
\newcommand{\SFMV}{\text{SF-MV}}
\newcommand{\DFMP}{\text{DF-MP}}
\newcommand{\DFMV}{\text{DF-MV}}
\newcommand{\MP}{\text{MP}}
\DeclareMathOperator*{\E}{\mathbb{E}}
\DeclareMathOperator*{\argmax}{arg\,max}
\newcommand{\gs}{U} 
\renewcommand{\ng}{\gamma} 
\definecolor{dgreen}{rgb}{0,.6 ,0}
\definecolor{dcyan}{rgb}{0,.6 ,.6}
\title{Maximin Fairness in Combinatorial Optimization: Random Solutions and Insights through Max-Cut} 
\title{Expected Maximin Fairness in Combinatorial Optimization: Insights through Max-Cut}
\title{Expected Maximin Fairness in Combinatorial Optimization}
\title{Expected Maximin Fairness: Insights through Max-Cut}
\title{Expected Maximin Fairness in Max-Cut and other Combinatorial Optimization Problems}
\author{Jad Salem$^1$}{Mathematics Department, United States Naval Academy, United States\and \url{https://jadsalem.com/} }{jsalem@usna.edu}{https://orcid.org/0000-0003-4480-4601}{}
\author{Reuben Tate\footnote{Corresponding author}}{CCS-3: Information Sciences, Los Alamos National Laboratory, United States \and \url{https://www.reubentate.com/}}{rtate@lanl.gov}{https://orcid.org/0000-0002-9170-8906}{}
\author{Stephan Eidenbenz}{CCS-3: Information Sciences, Los Alamos National Laboratory, United States}{eidenben@lanl.gov}{https://orcid.org/0000-0002-2628-1854}{}
\authorrunning{J. Salem, R. Tate, and S. Eidenbenz} 
\keywords{maximin fairness, combinatorial optimization, max cut} 
\begin{document}

\maketitle

\begin{abstract}
Maximin fairness is the ideal that the worst-off group (or individual) should be treated as well as possible. Literature on maximin fairness in various decision-making settings has grown in recent years, but theoretical results are sparse. In this paper, we explore the challenges inherent to maximin fairness in combinatorial optimization. We begin by showing that (1) optimal maximin-fair solutions are bounded by non-maximin-fair optimal solutions, and (2) stochastic maximin-fair solutions exceed their deterministic counterparts in expectation for a broad class of combinatorial optimization problems. In the remainder of the paper, we use the special case of Max-Cut to demonstrate challenges in defining and implementing maximin fairness. 
\end{abstract}

\newpage

\section{Introduction}
\label{sec:introduction}

Over the past couple decades, automated computer systems and algorithms have taken over much of the decision-making processes that are present in various companies, government agencies, and other organizations. Prior to this, decisions were often made without the aid of such algorithms, meaning that a decision maker's subjective opinions may potentially skew the decisions in a way that is unfair to certain groups or individuals. It is sometimes believed that automated computer algorithms (e.g., machine learning or AI-based algorithms) are ``objective'' and immune to such issues; however, without proper care, such systems may still potentially negatively impact certain groups or individuals due to biases in historical data, unintended positive feedback loops, or various incorrect assumptions \cite{de2019bias,kleinberg2018selection,dwork2012fairness,moss2012science,goodman2018amazon}.

In this work, we consider 
group fairness in the context of combinatorial optimization. In the field of welfare economics or social choice theory, a social welfare function  assigns relative rankings of social states \cite{arrow2012social}; in particular, a cardinal social welfare function takes as input a collection of individual utilities and outputs a real number. Without considering fairness, it is common to consider the utilitarian social welfare function, which simply sums or averages all of the individual utilities. An alternative is to consider a \emph{maximin} or \emph{Rawlsian} (introduced by John Rawls) function which stipulates that the utility of the least well-off individual should be maximized \cite{rawls2017theory}. Given a simple notion of group utility (e.g. sum of individual utilities in the group), the Rawlsian social welfare function can be extended for the purposes of group fairness, i.e., the group utility of the worst-off group is optimized.

In general, it is often the case that \emph{any} decision or solution causes at least one individual and/or group to be negatively impacted. For this reason, it often beneficial to consider a \emph{distribution} of solutions; one can imagine that such distributions ``smooths out'' any unfairness that exists amongst individual solutions evenly across the groups. In practice, one can sample from such a distribution iteratively over time in the hopes of achieving some notion of ``fairness across time.'' We discuss this temporal aspect in the discussion (Section \ref{sec:discussion}). We will refer to the problem of choosing a deterministic solution to maximize a fairness objective as \emph{static fairness}, and the problem of choosing a distribution over solutions to maximize a fairness objective as \emph{dynamic fairness}.


The purpose of this work is to build a theoretical foundation that considers the impact of dynamic fairness (compared to static fairness) in the context of combinatorial optimization problems in a Rawlsian framing. In addition to results for general combinatorial optimization problems, we also provide results and insights for the well-known Max-Cut problem which asks one to partition the vertices of a graph into two groups in a way that maximizes the number of edges crossing the partition. We show that one can put a utility function on the vertices or the edges in a way where solving Max-Cut is equivalent to optimizing an objective from a utilitarian perspective; as such, these edge or node utilities naturally induce a notion of ``fair Max-Cut.'' Although Max-Cut is not typically viewed as a resource allocation problem about which fairness issues are of concern, we choose this as our running example for a few reasons. First, the problem is well-studied yet non-trivial to solve (e.g., it is a quadratic binary optimization problem with no constraints); this makes it theoretically interesting to analyze. Second, it is believed that quantum computers have the potential to outperform classical computers on certain classes of combinatorial optimization problems, with much of the quantum optimization literature focusing on the Max-Cut problem \cite{guerreschi2019qaoa,tate2024guarantees,zhou2023qaoa,gaidai2024performance,wurtz2021maxcut,majumdar2021optimizing}. Since quantum optimization can efficiently produce distributions over solutions that would be intractable to produce under classical optimization, quantum computing has the potential to produce fairer dynamic solutions to this problem.

Moreover, considering \emph{distributions} of solutions (via dynamic fairness) is well-motivated since it is believed that quantum computers excel at generating distributions that are difficult for classical computers to produce \cite{bouland2019complexity}; we further discuss the potential impact of quantum computing on the fields of fairness and optimization in Section \ref{sec:discussion}. Third, we believe that the challenges arising from enforcing fairness in Max-Cut might shed light on fair versions of other graph optimization problems.


The paper is organized as follows. In Section \ref{sec:related-work}, we discuss related work that considers similar notions of fairness. In Section \ref{sec:maximinFairnessInCombOpt}, we provide general results that are slightly more generalized than the ``utility-based group Rawlsian fairness over distributions'' framework described earlier. In Section \ref{sec:maxcut}, we discuss the consequences of Section \ref{sec:maximinFairnessInCombOpt} in regards to Max-Cut while also providing additional results that are specific to the Max-Cut problem. Finally, we provide a discussion and conclude in Section \ref{sec:discussion}.

\section{Related Work}
\label{sec:related-work}

In this section, we position our work with respect to the literature on maximin fairness, static solutions to combinatorial optimization problems, and Max-Cut.


\subparagraph*{Maximin fairness in combinatorial optimization.} 
In this paper, we theoretically analyze maximin fairness in combinatorial optimization problems and refine our analysis for Max-Cut problems. In a similar vein, Garcia-Soriano and Bonchi consider maximin fairness in constrained ranking \cite{garcia2021maxmin}. In particular, their objective is to choose a ranking which maximizes the minimum utility incurred by an individual. The authors find a maximin-optimal distribution over rankings using linear programming and a maximin-optimal deterministic ranking using a greedy algorithm. Similarly, we will show how to find optimal distributions over cuts for two different maximin objectives in Max-Cut. Filippi et al. consider a Rawls-esque objective in facility location which seeks to minimize the average cost paid by the worst-off customers \cite{filippi2021single}. Unlike Garcia-Soriano and Bonchi, they formulate their problem with two objectives (one of which is the fairness objective) and provide a method for finding Pareto-optimal solutions. 
Despite these few notable exceptions, the theoretical literature on maximin fairness in combinatorial optimization has been slow to grow and theoretical literatire on maximin fairness remains sparse.


Instead, empirical work analyzing maximin-fair solutions has become more common. 
For example, Blanco and G{\' a}zquez 
consider ordered utility aggregators as objectives in the coverage facility location problem, and certain maximin fair objectives are special cases of these aggregators \cite{blanco2023fairness}. The authors empirically compare these different aggregators by finding optimal solutions and comparing their Gini indices. Our work, on the other hand, seeks to theoretically analyze the impact of maximin objectives on combinatorial optimization problems, and where possible, to describe the relationship between optimal solutions and the structure of the underlying problem.



\subparagraph*{Distributions over solutions in combinatorial optimization.} 
Randomization can pose several advantages over deterministic solutions in optimization, from increasing expected performance \cite{garcia2021maxmin}, to hedging against adversarial behavior \cite{salem2024secretary}, to avoiding routinely disadvantaging individuals when used iteratively over time \cite{lodi2023framework,lodi2024fairness}. The authors in \cite{garcia2021maxmin}, for example, find maximin-optimal distributions over rankings (i.e., stochastic solutions) and maximin-optimal deterministic rankings (i.e., deterministic solutions) and empirically show performance gaps between the two. 

In cases where decisions are made one-by-one over time,\footnote{Here, we are referring to time-invariant online optimization problems without feedback (i.e., not online learning problems), since in the presence of feedback, one would update their decision-making strategy as feedback is accrued.} one strategy is to find a distribution over solutions and sample from it iteratively. In \cite{lodi2023framework} and \cite{lodi2024fairness}, for example, the authors discuss the problem of minimizing a function (e.g., a cost or unfairness function) of a sequence of decisions, noting that the average performance over a sequence of decisions can be better than the performance at any single iteration. In a similar spirit, we will probe the performance differences stemming from optimal static decisions and optimal dynamic decisions in the context of a maximin-fair Max-Cut problem.

\subparagraph*{Max-Cut.} Max-Cut is the problem of choosing a subset $S$ of vertices in a graph which maximizes the number of edges with exactly one endpoint in $S$. Max-Cut is an NP-hard optimization problem \cite{karp2010reducibility} for which the best-known polynomial-time approximation ratio (due to Goemans and Williamson) is approximately 0.878 \cite{goemans1995improved}. However, the Max-Cut problem has recently gained traction in the field of quantum computing, as quantum algorithms have seen empirically better approximation ratios than the Goemans-Williamson algorithm on certain classes graphs \cite{crooks2018performance}. We discuss intersections of quantum computing and fairness further in Section~\ref{sec:discussion}.

\section{Maximin Fairness in Combinatorial Optimization}
\label{sec:maximinFairnessInCombOpt}

In this section, we first define our notions of maximin fairness in combinatorial optimization (Section~\ref{sec:notation}; then, we provide examples to illustrate the breadth of our model (Section~\ref{subsec:examples}); then, we probe the relationships between the various optimization problems defined in Section~\ref{sec:notation} (Section~\ref{sec:generalBounds}); finally, we discuss a simple randomized algorithm that hedges across the $\ng$ groups (Section~\ref{sec:separate-a}).

\subsection{Notation and Definitions}
\label{sec:notation}

We consider a population of individuals $\gs$; we sometimes refer to $U$ as the \emph{ground set}. Next, we assume that we are given a partition $\Gamma$ of $\gs$ into $\gamma$ non-empty groups: $$\Gamma = \{\gs_1, \dots, \gs_\gamma\} \text{ with } \gs = \gs_1 \sqcup \gs_2 \cdots \sqcup \gs_\gamma$$ where $\sqcup$ denotes a \emph{disjoint} union. In practice, these groups could be determined by legally protected attributes, various departments in an organization, or any number of other criteria. Note that it is possible for each group to contain exactly one individual, i.e., $|\gs_i|=1$ for $i=1,\dots,\gamma$ with $\gamma = |\gs|$; in such a case, the definitions presented here reduce to an \emph{individual} notion of fairness.

Next, we are given a set of decisions $\mathcal{X}$ with each decision potentially providing varying amounts of utility to various subsets of the population. More precisely, each decision $x\in \mathcal{X}$ has an associated set function 
\[
f_x : \mathcal{P}(\gs) \to \mathbb{R} ~~\mbox{for $x \in \mathcal{X}$}
\]


defined on the power set of $\gs$; here, $f_x(A)$ is the utility incurred by the subgroup of people $A$ from decision $x$. For many problems of interest, it is often the case that the utility that a subgroup $A$ incurs is simply the sum of individual utilities, i.e., $f_x(A) = \sum_{a\in A} f_x(\{a\})$; however, this is not necessarily the case.

From a utilitarian perspective, one would want to find a decision $x\in X$ that maximizes either the total utility of the population, $f_x(U)$ or the per-capita utility of the whole population, i.e., $f_x(\gs)/|\gs|$. We refer to such problems as the Maximum Value (\MV) and Maximum Proportion (\MP) problem respectively, defined below. We remark that both of these problems are equivalent in the sense that the $x$ that is optimal for the Maximum Value problem will also be optimal for the Maximum Proportion problem (and vice-versa).
\begin{align*}
    \MV &= \max_{x \in \mathcal{X}} f_x(\gs)  &\mbox{maximum value}\\
    \MP &= \max_{x \in \mathcal{X}} \frac{f_x(\gs)}{|\gs|} &\mbox{maximum proportion}
\end{align*}

We next define the Rawlsian \cite{rawls2017theory} framework discussed in the introduction. To this end, one needs a way to compare fairness between groups. One way to accomplish this is by considering, for any decision $x\in X$, the \emph{per-capita utility} within each group, i.e., $f_x(\gs_i)/|\gs_i|$. With this per-capita measure, for any decision $x\in X$, the worst-off group is given by $\min_{i\in [\gamma]} f_x(\gs_i)/|\gs_i|$. The Rawlsian framework would then stipulate that we should pick a solution from $\mathcal{X}$ in a way that makes the worst-off group as best-off as possible; we define such an optimization problem as the \emph{Static Fair Maximum Proportion} (\SFMP) problem defined below: 
\begin{align*}
    \SFMP &= \max_{x \in \mathcal{X}} \min_{i \in [\ng]} \frac{f_x(\gs_i)}{|\gs_i|} &\mbox{static fair maximum proportion}
\end{align*}

As we will later see in this paper, it is sometimes the case that for \emph{any} decision $x\in X$, that there is receives little to no utility, effectively bringing down the $\SFMP$. However, this can be rectified by sampling a decision $x$ from a {\it distribution} $D$; that is, one could choose decision $x$ with probability $D(x)$, for all $x \in \mathcal{X}$. In this way, even though a group $\gs_i$ might incur low utility from {\it some} decisions $x$ with non-zero support in $D$, the 
\emph{expected}  per-capita utility $\E_{x\sim D} \frac{f_x(U_i)}{|U_i|}$ might still be acceptable. Now, for any choice of a \emph{distribution} $D$, similar to the \SFMP{} problem, we can define the worst-off group to be $\min_{i\in[\gamma]} \E_{x\sim D} \frac{f_x(U_i)}{|U_i|}$. The Rawlsian framework would then again stipulate that we make this quantity as large as possible, thus motivating the Dynamic Fair Maximum Proportion (\DFMP) below:
\begin{align*}
    \DFMP &= \max_{D \in \mathcal{D}} \min_{i \in [\ng]} \E_{x \sim D} \frac{f_x(\gs_i)}{|\gs_i|} &\mbox{dynamic fair maximum proportion}
\end{align*}
where $\mathcal{D}$ is the set of distributions over the decision space $\mathcal{X}$. 

It may also be convenient to express some objective functions as {\it unnormalized} utility functions (i.e., the utility to group $\gs_i$ would be $f_x(\gs_i)$ instead of $f_x(\gs_i)/|\gs_i|$). This distinction is meaningless for general functions $f_x$, since the normalization factor of $1/|\gs_i|$ could be incorporated into $f_x$. That said, if one wishes to normalize utility in a different way, then the unnormalized objective is more natural. To that end, we introduce the Static Fair Maximum Value ($\SFMV$) and the Dynamic Fair Maximum Value ($\DFMV$), defined below.

\begin{align*}
    \SFMV &= \max_{x \in \mathcal{X}} \min_{i \in [\ng]} f_x(\gs_i) &\mbox{static fair maximum value} \\
    \DFMV &= \max_{D \in \mathcal{D}} \min_{i \in [\ng]} \E_{x \sim D} f_x(\gs_i) &\mbox{dynamic fair maximum value}
\end{align*}

 In Section \ref{subsec:examples} below, we provide examples of combinatorial optimization problems that fall into the general framework described above. 

\subsection{Examples}
\label{subsec:examples}

The setting above can be applied to many combinatorial optimization problems, including the following:
\begin{itemize}
    \item {\bf Maximum Independent Set (MIS).} The goal in MIS is to choose an independent set of maximum size. In this case, given an independent set $x$ and a subset $A$ of vertices, one could define $f_x(A) = |A \cap x|$. In this case, given a partition $\Gamma = \{V_1,\ldots,V_\gamma\}$ of the vertex set, $f_x(V_i)$ is the number of vertices in $V_i$ that are in the independent set $x$. The static fair maximum value ($\SFMP$) would therefore maximize the per-capita group utility $f_x(A)/|A|$ for the worst-off group in $\Gamma$. 
    
    $~~~~$To highlight the difference between Rawlsian fairness and other notions of fairness, consider the trivial independent set $x' = \varnothing$. Since all groups incur equal utility (in particular, a utility of 0), $x'$ would satisfy the well-studied demographic parity constraint \cite{feldman2015certifying}. However, there exists an independent set $x''$ that maximizes the Rawlsian objective $\min_{i \in [\ng]} |V_i \cap x|/|V_i|$ that would yield positive utility for at least one group. Thus, while utilities under $x''$ may vary across groups, they weakly dominate the utilities under $x'$.

    

    
    \item {\bf Longest Path Problem (LPP).} The goal in LPP is to select a path in an edge-weighted graph of maximum weight. Given a partition $V_1,\ldots,V_m$ of the vertex set $V$, one may wish to avoid devoting more resources to one group of nodes than to the others. In this case, one could define
    \[
    f_x(A) = \frac{1}{2}\sum_{v \in A} w(\{uv : u \in N(v)\} \cap x),
    \]
    for any $A \subseteq V$, where $w(\{uv : u \in N(v)\} \cap x)$ is the sum of the weights of edges in $x$ that are incident to $v$. 

    \item {\bf Maximum Clique (MC)}. Given a graph $G=(V,E)$, the goal in MC is to select a clique of maximum size. Suppose that a group $A \subseteq V$ gains utility from a clique $x$ dependent on the number of edges of $G[A]$ that are present in $x$. For instance, one could have $f_x(A) = |E(G[A]) \cap x|$. In this case, the optimal solution to the MC problem is $\argmax_{x \in C} f_x(V)$, where $C$ is the set of cliques in $G$. Also note that $f_x$ is nonnegative and superadditive.

\end{itemize}
Note that in each of these examples, the condition of Corollary~\ref{cor:general-bound} is satisfied.

\subsection{General Bounds}
\label{sec:generalBounds}

Many of the results in this work require our decision set functions $\{f_x\}_{x\in\mathcal{X}}$ to have certain properties, such as non-negativity. We define some properties of interest below.

\begin{definition}
    A set function $f : \mathcal{P}(\gs) \to \mathbb{R}$ is \emph{superadditive} if $f(A \cup B) \geq f(A) + f(B)$ for any $A,B \subseteq \gs$ with $A \cap B = \emptyset$.
\end{definition}

A related concept is that of supermodularity, which captures the notion of increasing returns.

\begin{definition}
    A set function $f : \mathcal{P}(\gs) \to \mathbb{R}$ is \emph{supermodular} if $f(A \cup B) + f(A \cap B) \geq f(A) + f(B)$ for any $A,B \subseteq \gs$.
\end{definition}

We begin by observing that nested subproblems experience monotonicity. In particular, in the next theorem, we will bound the $\DFMP$ of a problem in terms of the maximum proportion of a subproblem, and similarly for the unnormalized objectives.  The full proof of the theorem below can be found in Appendix~\ref{sec:proofs_section_3}.

\begin{theorem}[monotonicity of subproblems] \label{thm:subproblem-bounds}
Consider 
\begin{enumerate}
    \item an optimization problem with a ground set $\gs$ with partition $\Gamma = \{\gs_1,\ldots,\gs_{\ng}\}$ and nonnegative utility functions $f_x : \mathcal{P}(\gs) \to \mathbb{R}$, for $x \in \mathcal{X}$, and
    \item an optimization problem with a ground set $\gs'$ with partition $\Gamma' = \{\gs'_1,\ldots,\gs'_{\ng'}\} \subseteq \Gamma$ and superadditive utility functions $f_x' : \mathcal{P}(\gs') \to \mathbb{R}$ for $x \in \mathcal{X}$.
\end{enumerate} 
Let $\DFMV$ and $\DFMP$ be the dynamic fair maximum value and proportion of the first problem, and let $\MV'$ and $\MP'$ be the maximum value and proportion of the second. Then
\begin{align*}
\DFMV &\leq \MV' + \sum_{i=1}^{\ng'} \delta(\gs'_i) \mbox{, and}\\
\DFMP &\leq \MP' + \frac{\sum_{i=1}^{\ng'} \delta(\gs'_i)}{\sum_{i=1}^{\ng'} |\gs'_i|},
\end{align*}
where $f_x'(\gs'_i) + \delta(\gs'_i) \geq f_x(\gs'_i)$ for $i =1,\ldots,\ng'$.
\end{theorem}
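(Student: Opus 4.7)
The plan is to reduce the dynamic fair objectives on $(\gs,\Gamma)$ to the static maximum value/proportion of the subproblem $(\gs',\Gamma')$ in two steps. First, use $\Gamma' \subseteq \Gamma$ to pass from a min over $\ng$ groups to a min over the $\ng'$ groups; then collapse that smaller min into a single per-decision quantity involving $f_x'(\gs')$ via superadditivity, absorbing the error through the slack functions $\delta(\gs'_i)$.

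The first step is uniform across both bounds. For any $D \in \mathcal{D}$, since $\{\gs'_1,\ldots,\gs'_{\ng'}\}$ is a subfamily of $\{\gs_1,\ldots,\gs_{\ng}\}$, the min over the smaller family is at least as large:
\[
\min_{i \in [\ng]} \E_{x \sim D} f_x(\gs_i) \;\leq\; \min_{i \in [\ng']} \E_{x \sim D} f_x(\gs'_i),
\]
with the analogous inequality for the per-capita utilities. Taking $\max_D$ on both sides yields $\DFMV \leq \max_D \min_{i \in [\ng']} \E_D f_x(\gs'_i)$, and similarly for $\DFMP$.

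For the value bound I would then use nonnegativity of $f_x$ to pass from min to sum: $\min_i \E_D f_x(\gs'_i) \leq \sum_{i=1}^{\ng'} \E_D f_x(\gs'_i) = \E_D \sum_i f_x(\gs'_i)$. Applying $f_x(\gs'_i) \leq f_x'(\gs'_i) + \delta(\gs'_i)$ summand-wise and then invoking superadditivity of $f_x'$ on the pairwise disjoint sets $\gs'_1,\ldots,\gs'_{\ng'}$ gives $\sum_i f_x(\gs'_i) \leq f_x'(\gs') + \sum_i \delta(\gs'_i)$ for every $x$. Taking $\E_D$ and then $\max_D$ (which collapses to $\max_x$, as the optimum is attained at a point mass) yields $\DFMV \leq \MV' + \sum_i \delta(\gs'_i)$. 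For the proportion bound the argument is identical in structure, replacing $\min_i a_i \leq \sum_i a_i$ by the elementary inequality $\min_i a_i/b_i \leq (\sum_i a_i)/(\sum_i b_i)$ for real $a_i$ and positive $b_i$, applied with $b_i = |\gs'_i|$; the final division by $|\gs'| = \sum_i |\gs'_i|$ converts $\MV'$ to $\MP'$ and produces the stated error term.

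The main obstacle is notational discipline: $f_x$ and $f_x'$ live on different ground sets and satisfy different structural hypotheses (nonnegativity versus superadditivity), coupled only through $\delta$. Nonnegativity must be applied only to $f_x$ and superadditivity only to $f_x'$, with the slack $\delta$ used at exactly one step to bridge them; no other structural properties (monotonicity, $f_x'(\emptyset) = 0$, etc.) are needed.
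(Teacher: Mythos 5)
Your proposal is correct and follows essentially the same argument as the paper: both proofs chain the slack bound $f_x(\gs'_i) \leq f_x'(\gs'_i) + \delta(\gs'_i)$, superadditivity of $f_x'$ on the disjoint parts of $\Gamma'$, nonnegativity (or the mediant inequality for the proportion case) to pass from the sum over groups to the min, and the subfamily relation $\Gamma' \subseteq \Gamma$; the only difference is that you read the chain from $\DFMV$ up to $\MV'$ for an arbitrary distribution $D$, while the paper fixes an optimal $D$ and reads it downward. No gap.
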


\begin{proof}[Proof sketch]
We give a sketch of the proof of the second inequality here. To that end, let $D \in \argmax_{D \in \mathcal{D}} \min_{i \in [\ng]} \E_{x \sim D} f_x(\gs_i)/|\gs_i|$. The main steps of the analysis are to bound the maximum proportion of the subproblem by the sum of group objectives, and then to bound this sum of group objectives by the Rawlsian objective of the subproblem, as outlined below.
\begin{align*}
    \MP' + \frac{\sum_{i=1}^{\ng'} \delta(\gs'_i)}{\sum_{i=1}^{\ng'} |\gs'_i|} 
    &\geq \E_{x \sim D} \frac{\sum_{i=1}^{\ng'} f_x'(\gs'_i)}{\sum_{i=1}^{\ng'} |\gs'_i|} + \frac{\sum_{i=1}^{\ng'} \delta(\gs'_i)}{\sum_{i=1}^{\ng'} |\gs'_i|} &\mbox{by superadditivity} \\
    &\geq \min_{i \in [\ng']} \frac{\E_{x \sim D} f_x(\gs_i')}{|\gs'_i|} &\mbox{by nonnegativity} \\
    &\geq \DFMP &\mbox{since $\displaystyle D \in \argmax_{D \in \mathcal{D}} \min_{i \in [\ng]} \E_{x \sim D} \frac{f_x(\gs_i)}{|\gs_i|}$.}
\end{align*}
The analysis of the first inequality is similar, except that a different justification is required for bounding the sum of group utilities by the $\DFMV$.
\end{proof}

As a consequence, one can bound the maximum value and the maximum proportion by their dynamic fair counterparts. We summarize the relationships between the optimization problems presented above in the corollary below.

\begin{corollary} \label{cor:general-bound}
Suppose $f_x$ is nonnegative and superadditive for all $x \in \mathcal{X}$. Then
\begin{align*}
&\SFMV \overset{(a)}{\leq} \DFMV \overset{(b)}{\leq} \MV, ~~\mbox{and}\\ 
&\SFMP \overset{(c)}{\leq} \DFMP \overset{(d)}{\leq} \MP.
\end{align*}
\end{corollary}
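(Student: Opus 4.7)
The plan is to prove the four inequalities in the corollary by combining two simple observations: first, that any deterministic decision can be viewed as a degenerate point-mass distribution, and second, that Theorem~\ref{thm:subproblem-bounds} applies with the ``subproblem'' taken to be the original problem itself.

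First I would handle inequalities (a) and (c). For any fixed $x^\star \in \mathcal{X}$, let $D_{x^\star} \in \mathcal{D}$ be the distribution placing all mass on $x^\star$. Then $\E_{x \sim D_{x^\star}} f_x(\gs_i) = f_{x^\star}(\gs_i)$ for every $i \in [\ng]$, so
\[
\min_{i \in [\ng]} \E_{x \sim D_{x^\star}} f_x(\gs_i) \;=\; \min_{i \in [\ng]} f_{x^\star}(\gs_i).
\]
Taking the maximum over $x^\star \in \mathcal{X}$ on the right yields $\SFMV$, while taking the maximum over $D \in \mathcal{D}$ on the left can only increase the value, giving $\DFMV \geq \SFMV$. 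The analogous argument for the normalized objective (dividing by $|\gs_i|$ before taking the minimum) yields $\DFMP \geq \SFMP$.

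Next I would prove inequalities (b) and (d) by invoking Theorem~\ref{thm:subproblem-bounds} with the second problem chosen to equal the first. Concretely, I would set $\gs' = \gs$, $\Gamma' = \Gamma$ (so $\ng' = \ng$), $f_x' = f_x$, and $\delta(\gs_i') = 0$ for all $i$. The hypothesis that $f_x$ is nonnegative gives the non-negativity condition required of the first problem, while the hypothesis that $f_x$ is superadditive supplies the superadditivity condition required of the second; the bound $f_x'(\gs_i') + \delta(\gs_i') \geq f_x(\gs_i')$ holds trivially with equality. With these choices, $\MV' = \max_{x \in \mathcal{X}} f_x(\gs) = \MV$ and $\MP' = \MP$, and the sum $\sum_i \delta(\gs_i')$ vanishes, so the conclusion of Theorem~\ref{thm:subproblem-bounds} collapses exactly to $\DFMV \leq \MV$ and $\DFMP \leq \MP$.

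The only real thing to watch out for is matching the hypotheses of Theorem~\ref{thm:subproblem-bounds}: that result demands non-negativity for the outer problem and superadditivity for the inner one, and this is precisely why the corollary's statement imposes both assumptions on $f_x$ simultaneously. Beyond that, the argument is a direct reduction and does not require any new calculation.
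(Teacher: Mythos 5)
Your proposal is correct and matches the paper's own argument: inequalities (a) and (c) are obtained by viewing deterministic decisions as point-mass (Dirac) distributions inside $\mathcal{D}$, and (b) and (d) follow from Theorem~\ref{thm:subproblem-bounds} with $\Gamma'=\Gamma$, $f_x'=f_x$, and $\delta\equiv 0$. No gaps; this is essentially the same proof.
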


\begin{proof}
First, we prove (a) and (c). To that end, let $\mathcal{D}'$ be the set of Dirac distributions over $[N]$ (so, each distribution in $\mathcal{D}'$ is the indicator of a singleton). Then, since $\mathcal{D}' \subset \mathcal{D}$,
\begin{align*}
    \SFMV &= \max_{D \in \mathcal{D}'} \min_{i \in [\ng]} \E_{x \sim D}   f_x(\gs_i) \leq \max_{D \in \mathcal{D}} \min_{i \in [\ng]} \E_{x \sim D}   f_x(\gs_i) = \DFMV, ~~\mbox{and} \\
    \SFMP &= \max_{D \in \mathcal{D}'} \min_{i \in [\ng]} \E_{x \sim D}   \frac{f_x(\gs_i)}{|\gs_i|} \leq \max_{D \in \mathcal{D}} \min_{i \in [\ng]} \E_{x \sim D}   \frac{f_x(\gs_i)}{|\gs_i|} = \DFMP.
\end{align*}
Next, observe that (b) and (d) follow from Theorem~\ref{thm:subproblem-bounds} by setting $\Gamma' = \Gamma$ and $f'_x = f_x$ for all $x \in \mathcal{X}$.
\end{proof}

\begin{observation}
If $f_x$ is supermodular, nonnegative, and satisfies $f_x(\emptyset)=0$ for all $x \in \mathcal{X}$, then the condition of Corollary~\ref{cor:general-bound} holds.
\end{observation}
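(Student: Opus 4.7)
The plan is to show that the hypotheses of the observation (supermodularity, nonnegativity, and $f_x(\emptyset)=0$) imply the two hypotheses of Corollary~\ref{cor:general-bound}, namely nonnegativity and superadditivity. Nonnegativity is already assumed, so the entire task reduces to deriving superadditivity of $f_x$ from supermodularity together with the normalization $f_x(\emptyset)=0$.

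The key observation is that superadditivity is only a statement about \emph{disjoint} sets $A, B \subseteq \gs$, while supermodularity applies to arbitrary pairs. So I would fix arbitrary disjoint $A, B \subseteq \gs$ and apply the supermodular inequality
\[
f_x(A \cup B) + f_x(A \cap B) \geq f_x(A) + f_x(B).
\]
Since $A \cap B = \emptyset$, the hypothesis $f_x(\emptyset) = 0$ makes the second term on the left vanish, leaving exactly the superadditive inequality $f_x(A \cup B) \geq f_x(A) + f_x(B)$. Since the pair $A, B$ was arbitrary (subject to being disjoint), $f_x$ is superadditive, and this holds for every $x \in \mathcal{X}$.

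With nonnegativity and superadditivity both established for every $f_x$, the hypothesis of Corollary~\ref{cor:general-bound} is met and the chain of inequalities $\SFMV \leq \DFMV \leq \MV$ and $\SFMP \leq \DFMP \leq \MP$ follows immediately. There is no real obstacle here; the entire content of the observation is the single reduction from the supermodular inequality at disjoint $A, B$ to superadditivity via the normalization at the empty set. The nonnegativity hypothesis is used only to invoke the corollary, not in the superadditivity derivation itself.
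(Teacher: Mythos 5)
Your proof is correct and is exactly the intended argument: the paper states this as an observation without an explicit proof, and the standard derivation is precisely your reduction of the supermodular inequality at disjoint $A,B$ to superadditivity via $f_x(\emptyset)=0$, with nonnegativity carried over directly. Nothing to add.
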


\begin{observation}
The notion of average utility used in the definitions of $\DFMP$ and $\SFMP$ normalizes the total utility $f_x(\gs_i)$ by the $|\gs_i|$. However, in some cases, it may be of interest to normalize by some other value, $g(\gs_i)$. For example, if the elements of $\gs$ are weighted, $g(\gs_i)$ could be of sum of weights of elements of $\gs_i$. In this more general setting, Corollary~\ref{cor:general-bound} holds if $g$ is subadditive and nonnegative.
\end{observation}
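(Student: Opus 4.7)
The plan is to observe that inequalities (a), (b), and (c) of Corollary~\ref{cor:general-bound} carry over to the generalized normalization with no change, and that the only step requiring new work is (d), where the subadditivity and nonnegativity of $g$ actually enter.

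First, inequalities (a) and (c) say $\SFMV \leq \DFMV$ and $\SFMP \leq \DFMP$. The original proof only uses the fact that Dirac distributions form a subset $\mathcal{D}' \subseteq \mathcal{D}$, so restricting the outer maximization to $\mathcal{D}'$ can only shrink the value. This argument never inspects the normalizer, so it is unaffected when $|\gs_i|$ is replaced by $g(\gs_i)$. Inequality (b), $\DFMV \leq \MV$, also has no normalizer and continues to follow from Theorem~\ref{thm:subproblem-bounds} applied with $\Gamma' = \Gamma$, $f'_x = f_x$, and $\delta \equiv 0$.

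The substantive step is (d), namely $\DFMP \leq \MP$ under the new definitions. Let $D^* \in \argmax_{D \in \mathcal{D}} \min_{i \in [\ng]} \E_{x \sim D} f_x(\gs_i)/g(\gs_i)$. I would then chain three estimates:
\begin{align*}
\DFMP
&= \min_{i \in [\ng]} \E_{x \sim D^*} \frac{f_x(\gs_i)}{g(\gs_i)} \\
&\leq \E_{x \sim D^*} \frac{\sum_{i=1}^{\ng} f_x(\gs_i)}{\sum_{i=1}^{\ng} g(\gs_i)} \\
&\leq \E_{x \sim D^*} \frac{f_x(\gs)}{g(\gs)} \\
&\leq \MP.
\end{align*}
The first step is the weighted mean inequality $\min_i a_i/w_i \leq \sum_i a_i / \sum_i w_i$ for $a_i \geq 0$ and $w_i > 0$, applied with $w_i = g(\gs_i)$ and $a_i = \E_{x \sim D^*} f_x(\gs_i)$, combined with linearity of expectation. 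The middle step is where the hypotheses on $g$ and $f_x$ combine: superadditivity of $f_x$ gives $\sum_i f_x(\gs_i) \leq f_x(\gs)$ in the numerator; subadditivity of $g$ gives $g(\gs) \leq \sum_i g(\gs_i)$, hence $1/\sum_i g(\gs_i) \leq 1/g(\gs)$; and nonnegativity of $f_x$ ensures the two bounds compose without flipping direction. The last step bounds an expectation by its pointwise maximum.

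The main obstacle is housekeeping around well-definedness of the ratios: one needs $g(\gs_i) > 0$ for each $i \in [\ng]$, and $g(\gs) > 0$ for the $\MP$ in the new sense to make sense at all. These positivity conditions should be treated as implicit standing assumptions, as they are required simply for the proportional objectives to be defined. Once they are granted, no new ingredients beyond those already in the proof of Theorem~\ref{thm:subproblem-bounds} are needed: the role previously played by $|\gs_i|$ is taken over verbatim by $g(\gs_i)$, with subadditivity of $g$ substituted at exactly the point where additivity of $|\cdot|$ on a disjoint partition was tacitly used.
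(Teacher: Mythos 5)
Your proposal is correct and follows essentially the same route as the paper: the paper states this observation without a separate proof, and your argument is precisely the paper's proof of Corollary~\ref{cor:general-bound} (via Theorem~\ref{thm:subproblem-bounds} with $\Gamma'=\Gamma$, $f'_x=f_x$, $\delta\equiv 0$) with $g(\gs_i)$ substituted for $|\gs_i|$, correctly identifying that subadditivity of $g$ is needed exactly where the equality $|\gs|=\sum_i|\gs_i|$ was tacitly used and that the mediant inequality only needs positive weights. Your remark that strict positivity of $g$ on the groups must be assumed for the ratios to be well-defined is a fair and accurate refinement of the stated hypotheses.
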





\subsection{Separate-$\mathcal{A}$-Solve Algorithm}
\label{sec:separate-a}

Finding the optimal distribution $D^*$ associated with a problem's \DFMP{} value is typically going to be intractable for most problems of interest since the associated decision set $\mathcal{X}$ is already large, let alone the set of all possible distributions over $\mathcal{X}$. Thus, in practice, one would have to run an algorithm in the hopes of finding a distribution of solutions with a ``good enough" amount of dynamic fairness.

The best algorithms for maximizing dynamic fairness should  exploit the structure of the specific underlying optimization problem at hand. Nonetheless, we present a very basic algorithm that can be used as a baseline algorithm for future algorithms to compare against.

For this baseline algorithm, we assume we are already given a deterministic algorithm $\mathcal{A}$ that takes as input, a subset $\gs'$ of individuals from $\gs$, and returns as output, an decision $x \in \mathcal{X}$, often with the goal of making $f_x(\gs')$ as large as possible. Using $\mathcal{A}$, we develop a meta-algorithm called the \emph{Separate-$\mathcal{A}$-Solve algorithm} which, as the name suggests, creates a distribution of solutions by considering each group separately. More specifically, the Separate-$\mathcal{A}$-Solve algorithm works as follows: if $\gs$ is partitioned as $\Gamma = \{\gs_1, \dots, \gs_\gamma\}$, then $\mathcal{A}$ is run on each group $\gs_1$ through $\gs_\gamma$ to obtain decisions $x_1, \dots, x_\gamma$; the Separate-$\mathcal{A}$-Solve algorithm then simply returns each solution $x_i$ with uniform probability (i.e. with probability $1/\gamma$).

In Theorem \ref{thm:separate_A_alg} below, we show that as long as $\mathcal{A}$ is able to return a group utility that is at least some fraction of the group size for every group, then a lower bound on the dynamic fairness achieved by the corresponding Separate-$\mathcal{A}$-Solve algorithm can be obtained.

\begin{theorem}
\label{thm:separate_A_alg}
  Let $\gs$ be the ground set of an optimization problem with partition $\Gamma = \{\gs_1, \dots, \gs_\gamma\}$ and non-negative utility functions $f_x: \mathcal{P}(\gs) \to \mathbb{R}$ for each $x\in \mathcal{X}$. Let $\alpha \geq 0$ and let $\mathcal{A}$ be an algorithm that, on input of $i\in[\gamma]$, returns a solution $x_i \in \mathcal{X}$ 
  with the property that $f_{x_i}(\gs_i) \geq \alpha |\gs_i|$. In this case, the {\it Separate-$\mathcal{A}$-Solve} algorithm attains the following dynamic fair bound: $\min_{i \in [\ng]} \E_{x \sim D} \frac{f_x(\gs_i)}{|\gs_i|} \geq \frac{\alpha}{\gamma}$.
\end{theorem}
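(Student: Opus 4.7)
The plan is to show that the uniform-mixture distribution $D$ produced by the Separate-$\mathcal{A}$-Solve algorithm satisfies the claimed per-group lower bound by a direct computation, using nonnegativity of the utility functions to discard all but one term of the expectation.

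First, I would fix an arbitrary index $i \in [\gamma]$ and write out the expectation under $D$ explicitly. Since $D$ is uniform over $\{x_1, \dots, x_\gamma\}$, we have
\[
\E_{x \sim D} \frac{f_x(\gs_i)}{|\gs_i|} \;=\; \frac{1}{\gamma}\sum_{j=1}^{\gamma} \frac{f_{x_j}(\gs_i)}{|\gs_i|}.
\]
Next, using the assumption that $f_x$ is nonnegative for every $x \in \mathcal{X}$, every summand on the right-hand side is nonnegative, so I can drop all terms except the $j = i$ term to obtain the inequality
\[
\E_{x \sim D} \frac{f_x(\gs_i)}{|\gs_i|} \;\geq\; \frac{1}{\gamma}\cdot\frac{f_{x_i}(\gs_i)}{|\gs_i|}.
\]
Now I would invoke the defining property of $\mathcal{A}$, namely $f_{x_i}(\gs_i) \geq \alpha|\gs_i|$, which immediately gives
\[
\E_{x \sim D} \frac{f_x(\gs_i)}{|\gs_i|} \;\geq\; \frac{1}{\gamma}\cdot\frac{\alpha|\gs_i|}{|\gs_i|} \;=\; \frac{\alpha}{\gamma}.
\]
Since $i \in [\gamma]$ was arbitrary, taking the minimum over $i$ yields the desired bound.

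There is really no substantive obstacle here; the argument is essentially bookkeeping. The one subtlety worth flagging explicitly is that the step of discarding the $j \neq i$ contributions uses \emph{nonnegativity} of $f_x$ in an essential way, so the hypothesis of the theorem cannot be weakened to allow arbitrary real-valued utilities without some compensating bound on how negative the cross-terms $f_{x_j}(\gs_i)$ (for $j \neq i$) can be. It is also worth noting that the bound depends on $\mathcal{A}$ only through the uniform per-group guarantee $\alpha$, so the algorithm is oblivious to whether $\mathcal{A}$ happens to produce a solution that performs well on other groups; any such additional performance only strengthens the bound.
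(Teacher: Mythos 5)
Your proof is correct and follows essentially the same argument as the paper: expand the expectation over the uniform mixture, use nonnegativity to keep only the $j=i$ term, and apply the guarantee $f_{x_i}(\gs_i)\ge\alpha|\gs_i|$. Your remark on the essential role of nonnegativity is a nice aside but not a departure from the paper's reasoning.
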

\begin{proof}
    Let $x_1, \dots, x_m$ be the decisions obtained from the Separate-$\mathcal{A}$-Solve algorithm. By the nature of the  Separate-$\mathcal{A}$-Solve algorithm, we have that  $f_{x_i}(\gs_i) \geq \alpha |\gs_i|$. Thus, we have that, for each $i=1,\dots,\gamma$, that
    $$\E_{x \sim D} \frac{f_{x}(\gs_i)}{|\gs_i|}  = \sum_{j=1}^\gamma \Pr(x = x_j)  \frac{f_{x_j}(\gs_i)}{|\gs_i|} = \sum_{j=1}^\gamma \frac{1}{\gamma}  \frac{f_{x_j}(\gs_i)}{|\gs_i|} \geq \frac{1}{\gamma}  \frac{f_{x_j}(\gs_j)}{|\gs_j|} \geq \frac{1}{\gamma}  \frac{\alpha |\gs_j|}{|\gs_j|} = \frac{\alpha}{\gamma},$$
    where the first inequality above is obtained by just considering the $j=i$ addend of the sum and the fact that all the $f_x$'s are non-negative. Since the above calculations hold for every $i\in[\gamma]$, then $\min_{i \in [\ng]} \E_{x \sim D} \frac{f_x(\gs_i)}{|\gs_i|} \geq \frac{\alpha}{\gamma}$.
\end{proof}

\begin{observation}
We note that Theorem \ref{thm:separate_A_alg} can be easily modified to the case where $\mathcal{A}$ is a \emph{randomized} algorithm instead of a deterministic algorithm as previously assumed.
\end{observation}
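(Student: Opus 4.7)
The plan is to adapt the proof of Theorem~\ref{thm:separate_A_alg} by interpreting the Separate-$\mathcal{A}$-Solve algorithm as producing a two-stage mixture distribution and invoking linearity of expectation to marginalize out $\mathcal{A}$'s internal coins. First I would restate the hypothesis in the randomized setting: $\mathcal{A}$ on input $i$ now outputs a random decision $X_i \in \mathcal{X}$, and the guarantee becomes $\E_{\mathcal{A}} f_{X_i}(\gs_i) \geq \alpha |\gs_i|$, where the expectation is taken only over $\mathcal{A}$'s internal randomness. I would also clarify that the induced distribution $D$ is obtained by first picking $j \in [\ng]$ uniformly at random and then drawing $X_j \sim \mathcal{A}(j)$ using fresh random coins, so that $D = \frac{1}{\ng} \sum_{j=1}^{\ng} D_j$, where $D_j$ is the output distribution of $\mathcal{A}(j)$.

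Next, I would repeat the calculation in the original proof, but now splitting the randomness in $x \sim D$ into (i) the uniform index choice $j$ and (ii) the draw from $D_j$. Using the tower rule,
\begin{align*}
\E_{x \sim D} \frac{f_x(\gs_i)}{|\gs_i|}
&= \sum_{j=1}^{\ng} \Pr(j \text{ chosen}) \cdot \E_{X_j \sim D_j} \frac{f_{X_j}(\gs_i)}{|\gs_i|}
= \frac{1}{\ng} \sum_{j=1}^{\ng} \E_{\mathcal{A}} \frac{f_{X_j}(\gs_i)}{|\gs_i|}.
\end{align*}
The only substantive change from the deterministic version is that each summand is now an expectation over $\mathcal{A}$'s coins rather than a deterministic quantity; but linearity of expectation lets us pull this inner expectation outside without affecting the argument.

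Finally, I would drop all terms in the sum except $j=i$ (valid because each $f_{X_j}$ is nonnegative), and apply the in-expectation guarantee to the single retained term:
\begin{align*}
\frac{1}{\ng} \sum_{j=1}^{\ng} \E_{\mathcal{A}} \frac{f_{X_j}(\gs_i)}{|\gs_i|}
\;\geq\; \frac{1}{\ng} \cdot \E_{\mathcal{A}} \frac{f_{X_i}(\gs_i)}{|\gs_i|}
\;\geq\; \frac{1}{\ng} \cdot \frac{\alpha |\gs_i|}{|\gs_i|}
\;=\; \frac{\alpha}{\ng}.
\end{align*}
Since this holds for every $i \in [\ng]$, the minimum over $i$ is at least $\alpha/\ng$, matching the deterministic bound.

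The only place where real care is needed, and the step I expect to be the main obstacle (or at least the only spot worth flagging), is the bookkeeping of the two independent sources of randomness when defining $D$: one must make sure the internal coins of $\mathcal{A}$ are drawn afresh at each invocation and are independent of the uniform index choice, so that the mixture representation $D = \frac{1}{\ng} \sum_j D_j$ is valid and the in-expectation guarantee of $\mathcal{A}$ can be applied term-by-term. Once that setup is pinned down, the rest of the argument is identical in spirit to the deterministic proof, which is precisely what the observation asserts.
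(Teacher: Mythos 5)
Your proof is correct and is exactly the straightforward modification the paper has in mind: the paper states this observation without proof, and the intended argument is precisely your adaptation of the proof of Theorem~\ref{thm:separate_A_alg}, replacing the pointwise guarantee with an in-expectation guarantee over $\mathcal{A}$'s coins and using the mixture decomposition $D = \frac{1}{\ng}\sum_j D_j$ together with nonnegativity to retain only the $j=i$ term. Your flagged care point about fresh, independent internal coins is the right (and only) subtlety.
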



\section{Max-Cut}
\label{sec:maxcut}
The Max-Cut problem is as follows: given a graph $G=(V,E)$, find a cut, i.e. a partition of the vertices into two disjoint groups, such that the number of edges between the two groups is maximized. The Max-Cut objective can mathematically be written as \begin{equation}\label{eqn:edgeCrossCutIndicator}\MC(G) = \max_{S \subseteq V} \sum_{e \in E} X_e(S),\end{equation} where, for any edge $(u,v) \in E$, $X_e := X_e(S) =  \mathbf{1}[e \text{ crosses the cut}]$.

As the Max-Cut objective is typically written as a sum over the edges of the graph, it is no surprise that the Max-Cut problem can be viewed as a problem of maximizing the average utility of individuals, where each edge corresponds to an individual. However, in several real-world applications (e.g., if the graph is a social network graph), it is more common for the \emph{nodes} to represent individuals; we will show that Max-Cut can also be written as a the average of utilities over the nodes. In Sections~\ref{sec:edgeUtilities}-\ref{sec:nodeUtilities}, we consider maximin-fair Max-Cut through the notions of edge and node utilities, respectively, and prove tight bounds on the optimization problems defined in Section~\ref{sec:notation}. See Table~\ref{tab:gaps} for selected results.


\begin{table}[]
    \centering
    \bgroup
    \def\arraystretch{1.3}
    \begin{tabular}{l||cc|cc} \bottomrule
     & \multicolumn{2}{c}{\textbf{Edge Utilities}} & \multicolumn{2}{c}{\textbf{Node Utilities}} \\
     & Bounds & Reference & Bounds & Reference \\ \hline 
     $\MP(G)-\DFMP(G,\Gamma)$ & [0, 1/2] & Prop. \ref{prop:mp-dfmp-edge}  & [0,1] & Prop. \ref{prop:mp-dfmp-node} \\
     $\DFMP(G,\Gamma) - \SFMP(G,\Gamma)$ & [0,1] & Prop. \ref{prop:dfmp-sfmp-edge} & [0, 1/2] & Prop. \ref{prop:dfmp-minus-sfmp-node} \\ \toprule
    \end{tabular}
    \egroup
    \caption{Tight bounds on the gaps between the optimization problems defined in Section~\ref{sec:notation} for the Max-Cut problems with edge utilities and with node utilities.}
    \label{tab:gaps}
\end{table}

    

\subsection{Edge Utilities}
\label{sec:edgeUtilities}
We now express fair Max-Cut with edge utilities using the notation in Section \ref{sec:generalBounds}. The set of decisions is simply the possible choices of cuts; since each cut is determined by one side of the cut, we can write $\mathcal{X} = \mathcal{P}(V)$. Our set of individuals $\gs$ (as described in Section \ref{sec:generalBounds}) is simply $E$ (the edge set of the input graph $G$) and we assume that we are given a partition of $E$ with $\ng$ disjoint groups, i.e., $\Gamma = (E_1,\ldots,E_{\ng})$, where $E = E_1 \sqcup \cdots \sqcup E_{\ng}$. With this notation in mind, all the optimization problems defined in Section~\ref{sec:notation} (e.g., $\SFMP(G,\Gamma),...$) are defined with respect to the Max-Cut problem with edge utilities.



Now, each individual $e \in \gs$ obtains utility $X_e$ (Equation \ref{eqn:edgeCrossCutIndicator}) depending on whether or not $e$ crosses the cut. Thus, using the notation from Section \ref{sec:generalBounds},  for every $x \in \mathcal{X}$ and every $e \in \gs$, we have
$f_x(\{e\}) = X_e$. As an abuse of notation, we will often write $f_x(e)$ for $f_x(\{e\})$. For every $x \in \mathcal{X}$, the general set function is defined by simply summing the individual utilities, i.e., for any $F \subseteq \gs$,
\begin{equation} \label{eqn:fx-edge-def}
f_x(F) = \sum_{e \in F} f_x(e) = \sum_{e \in F} X_e.    
\end{equation}

Defined this way, it becomes clear that the corresponding MV (maximum value) is exactly the Max-Cut objective, i.e., Max-Cut maximizes the social welfare of the group through the lens of utilitarianism.

\begin{observation} \label{obs:MV-is-MC}
    For fair Max-Cut with edge utilities, we have that $\MV = \MC(G)$.
\end{observation}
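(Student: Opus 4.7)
The plan is to simply unwind the two definitions on the left- and right-hand sides and verify that they coincide term-by-term. By the definition in Section~\ref{sec:notation} we have $\MV = \max_{x \in \mathcal{X}} f_x(\gs)$, and for fair Max-Cut with edge utilities we have $\gs = E$ and $\mathcal{X} = \mathcal{P}(V)$. Substituting $F = E$ into the definition~\eqref{eqn:fx-edge-def} of $f_x$ yields $f_x(E) = \sum_{e \in E} X_e$, where $X_e$ is the indicator that edge $e$ crosses the cut determined by $x$. Taking the maximum over $x \in \mathcal{P}(V)$ therefore recovers exactly the right-hand side of~\eqref{eqn:edgeCrossCutIndicator}, which is $\MC(G)$.

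The only piece of bookkeeping worth pointing out is the identification of the decision variable $x \in \mathcal{X} = \mathcal{P}(V)$ in the fair Max-Cut formulation with the subset $S \subseteq V$ appearing in~\eqref{eqn:edgeCrossCutIndicator}: under this identification the indicators $X_e(S)$ and the edge utility $f_x(\{e\})$ are defined to agree. Once this correspondence is fixed, the chain of equalities $\MV = \max_{x} f_x(E) = \max_{x} \sum_{e \in E} X_e = \MC(G)$ is immediate.

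There is no genuine obstacle to overcome here; the observation is essentially true by construction of the edge-utility model, and its purpose is to make explicit that Max-Cut coincides with the utilitarian (maximum-value) objective in this framework, setting up the later contrast with the fair objectives $\SFMV$, $\DFMV$, $\SFMP$, and $\DFMP$ analyzed in the remainder of Section~\ref{sec:edgeUtilities}.
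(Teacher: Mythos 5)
Your proposal is correct and matches the paper's treatment: the paper states this as an immediate observation following the definition of $f_x$ in~(\ref{eqn:fx-edge-def}), relying on exactly the same definitional unwinding $\MV = \max_{x} f_x(E) = \max_{x}\sum_{e\in E} X_e = \MC(G)$ that you spell out. Nothing further is needed.
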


\subsubsection{Comparing Optimal Static and Random Solutions}

Recall that in Section~\ref{sec:generalBounds}, we defined several objectives for combinatorial optimization problems. As noted in Observation~\ref{obs:MV-is-MC} above, if we let $f_x(F)$ be the number of edges in $F$ that cross the cut $x$, then the Maximum Value (MV) problem reduces to Max-Cut. 

The maximin-fair version of MV, however, is arguably not a useful notion of fairness. For example, consider the complete bipartite graph $K_{n,n} = (V,E)$ with groups $\gs_1 = \{e\}$ and $\gs_2 = E \setminus\{e\}$, for some edge $e \in E$. Then $f_x(\gs_1) \leq 1$, whereas the bipartition $y$ achieves $f_y(\gs_2) = n^2-1$. Thus, it is impossible to approach equality with respect to this {\it unnormalized} notion of fairness. For this reason, we will restrict our attention to the Maximum Proportion (MP) and its fair variants in this section.


When the partition of the edges $E$ only has one part (i.e., $\ng=1$ and $\gs_1 = E$), then it is clear that $\SFMP = \DFMP = \MP$. These quantities are also identical when $G$ is bipartite as seen in the proposition below.

\begin{proposition}
    \label{thm:propsEqualWhenBipartite}
    If $G = (V,E)$ is bipartite, then $\SFMP = \DFMP = \MP = 1.$
\end{proposition}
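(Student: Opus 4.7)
The plan is to exhibit a single deterministic cut that simultaneously achieves per-capita utility $1$ for every group, and then invoke the chain of inequalities from Corollary~\ref{cor:general-bound} to pin down $\DFMP$ as well. The key observation is that bipartiteness is an exceptionally strong hypothesis here: it guarantees a cut that crosses \emph{every} edge, so no edge is ``sacrificed'' and hence no group is disadvantaged.

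Concretely, I would first let $x^\star \subseteq V$ be the side of a fixed bipartition of $G$, so that every edge $e \in E$ satisfies $X_e(x^\star) = 1$. Then for each group $E_i$ in the partition $\Gamma$, formula~\eqref{eqn:fx-edge-def} gives $f_{x^\star}(E_i) = |E_i|$, so $f_{x^\star}(E_i)/|E_i| = 1$. Taking the minimum over $i \in [\ng]$ shows $\SFMP \geq 1$. The same calculation with $i$ replaced by the whole ground set yields $\MP \geq 1$.

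For the matching upper bound, I would note that $X_e \in \{0,1\}$ implies $f_x(E_i)/|E_i| \leq 1$ for every deterministic $x$ and every group, and by linearity of expectation the analogous bound holds for any distribution $D \in \mathcal{D}$. Hence $\SFMP \leq 1$, $\MP \leq 1$, and $\DFMP \leq 1$. Combining with the previous paragraph gives $\SFMP = \MP = 1$.

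Finally, to handle $\DFMP$, I would appeal to Corollary~\ref{cor:general-bound}: the utility functions $f_x$ defined in \eqref{eqn:fx-edge-def} are nonnegative and (being sums of nonnegative indicators over disjoint index sets) additive, hence superadditive, so the hypothesis of the corollary is met. Therefore $1 = \SFMP \leq \DFMP \leq \MP = 1$, forcing $\DFMP = 1$ as well. There is no real obstacle here; the only thing to be careful about is verifying that the edge utility setup genuinely satisfies the superadditivity hypothesis of Corollary~\ref{cor:general-bound}, which is immediate from \eqref{eqn:fx-edge-def}.
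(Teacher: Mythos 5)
Your proposal is correct and takes essentially the same approach as the paper: exhibit the bipartition cut, observe that every group attains per-capita utility $1$, and sandwich $\DFMP$ between $\SFMP$ and the trivial upper bound of $1$. Your write-up is simply a more explicit version of the paper's (terser) argument, including the justification via Corollary~\ref{cor:general-bound} that the paper leaves implicit.
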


\begin{proof}
    Let $S\subseteq V$ be a bipartition of $G$. Then $f_S(E_i) = |E_i|/|E_i| = 1$ for all $i$, implying that $\SFMP(G,\Gamma) = \DFMP(G,\Gamma)=1$. Since $f_S(E) = |E|/|E|=1$ as well, this proves the claim.
\end{proof}

\begin{proposition}
    Let $G$ be a graph with a single-edge partitioning. Then $\SFMP$ can only take on one of two values: 1 in the case that $G$ is bipartite, and 0 otherwise.
\end{proposition}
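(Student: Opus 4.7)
The plan is to unpack the definition of a single-edge partitioning and use the fact that with singleton groups, each per-capita group utility is an indicator. Under a single-edge partitioning, each part has the form $E_i=\{e_i\}$ with $|E_i|=1$, so for any cut $S\subseteq V$ we have
\[
\frac{f_S(E_i)}{|E_i|} \;=\; X_{e_i}(S) \;\in\; \{0,1\}.
\]
Therefore $\min_{i\in[\gamma]} f_S(E_i)/|E_i|$ is itself a $\{0,1\}$-valued quantity, equal to $1$ precisely when $X_{e}(S)=1$ for every edge $e\in E$, and equal to $0$ otherwise. Taking the maximum over $S$ yields $\SFMP(G,\Gamma)\in\{0,1\}$.

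It then remains to decide which of the two values is attained. The key observation is that the existence of a cut $S$ for which every edge crosses is equivalent to $G$ being bipartite (with $S$ and $V\setminus S$ as the two sides of the bipartition). First I would handle the bipartite case: if $G$ is bipartite, I would invoke Proposition~\ref{thm:propsEqualWhenBipartite} (which already gives $\SFMP=1$ for any partition of the edges, including the single-edge partitioning), or equivalently just exhibit a bipartition $S$ and compute directly that $X_e(S)=1$ for every $e\in E$, whence the minimum is $1$ and so $\SFMP\geq 1$; combined with the obvious upper bound $\SFMP\leq 1$, this gives $\SFMP=1$. For the non-bipartite case I would argue by contradiction: if $\SFMP=1$, there would exist some cut $S^\star$ such that $X_e(S^\star)=1$ for every $e\in E$, which would exhibit a proper $2$-coloring of $G$ and contradict the assumption that $G$ is not bipartite; hence $\SFMP=0$.

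There is no real obstacle here; the only thing to be careful about is making the equivalence between ``a cut crosses every edge'' and ``$G$ is bipartite'' explicit, since it is the sole nontrivial ingredient, and noting that the per-capita normalization is irrelevant precisely because each $|E_i|=1$.
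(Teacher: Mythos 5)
Your proposal is correct and follows essentially the same route as the paper's proof: invoke Proposition~\ref{thm:propsEqualWhenBipartite} for the bipartite case, and for the non-bipartite case use the fact that every cut leaves some edge uncut (you phrase this as a contradiction, the paper argues it directly, but the content is identical). No gaps.
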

\begin{proof}
    Let $\Gamma=\{E_1,\ldots,E_{|E|}\}$ be the edge partition, where $E_i = \{e_i\}$ for all $i$. If $G$ is bipartite, then we know $\SFMP = 1$ (regardless of the how the edges are partitioned) from Proposition~\ref{thm:propsEqualWhenBipartite}. Now suppose $G$ is not bipartite and let $S$ be any cut. Then, there exists an edge $e_j \in E$  that is not cut by $S$, implying that $\min_i f_S(E_i)/|E_i| = 0$. Since $S$ was an arbitrary cut, it follows that $\SFMP = \max_S \min_i f_S(E_i)/|E_i| = 0$.
\end{proof}

In the remainder of this subsection, we will discuss the implications of the general bounds of Section~\ref{sec:generalBounds} on Max-Cut with edge utilities, and how the general bounds can be improved. First, observe that $f_x$, as defined in (\ref{eqn:fx-edge-def}), is nonnegative and superadditive. Thus, we have the following:
\begin{corollary}
For the Max-Cut problem with graph $G$ and edge partition $\Gamma = \{E_1,\ldots,E_{\ng}\}$, we have that $\SFMP(G,\Gamma) \leq \DFMP(G,\Gamma) \leq \MP(G).$

\end{corollary}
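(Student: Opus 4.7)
The plan is to verify the two hypotheses of Corollary~\ref{cor:general-bound}, namely nonnegativity and superadditivity of $f_x$, and then invoke that corollary directly. Since the present result is stated immediately after introducing the edge-utility formulation $f_x(F) = \sum_{e \in F} X_e$ with $X_e = \mathbf{1}[e \text{ crosses } x] \in \{0,1\}$, both properties are essentially immediate from this definition.

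First I would observe nonnegativity: for any cut $x \in \mathcal{P}(V)$ and any $F \subseteq E$, $f_x(F)$ is a sum of indicator variables and is therefore $\geq 0$. Next I would observe that $f_x$ is in fact \emph{additive} on disjoint sets, which is a strictly stronger property than superadditivity: for any disjoint $A,B \subseteq E$,
\[
f_x(A \cup B) = \sum_{e \in A \cup B} X_e = \sum_{e \in A} X_e + \sum_{e \in B} X_e = f_x(A) + f_x(B),
\]
so in particular $f_x(A \cup B) \geq f_x(A) + f_x(B)$. Since this holds for every $x \in \mathcal{X}$, the hypothesis of Corollary~\ref{cor:general-bound} is satisfied.

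Having verified the hypothesis, I would simply apply Corollary~\ref{cor:general-bound} to the Max-Cut instance, which yields $\SFMP(G,\Gamma) \leq \DFMP(G,\Gamma) \leq \MP(G)$ as desired. There is no real obstacle here; the content of the statement is just the translation of the general-bounds corollary into the specific language of Max-Cut with edge utilities, once one notes that the edge utility functions are nonnegative and (super)additive.
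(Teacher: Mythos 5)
Your proposal is correct and follows exactly the paper's route: the paper likewise observes that the edge-utility function $f_x(F) = \sum_{e \in F} X_e$ is nonnegative and superadditive (indeed additive) and then invokes Corollary~\ref{cor:general-bound}. Your explicit verification of additivity is just a slightly more detailed version of the same one-line argument.
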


Moreover, as seen in Table \ref{tab:gaps}, in the case of Max-Cut with edge utilities, these bounds can be tightened as seen in Propositions \ref{prop:mp-dfmp-edge} and \ref{prop:dfmp-sfmp-edge} below. The proofs for these propositions can be found in Appendix \ref{sec:proofs_edgeUtilities}.

\begin{proposition} \label{prop:mp-dfmp-edge}
For any graph $G$ and edge partition $\Gamma$, it must be that $\MP - \DFMP \leq 1/2$. Moreover, for any $\varepsilon > 0$, there exists a graph $G$ and edge partition $\Gamma$ for which $\MP - \DFMP > \frac{1}{2} - \varepsilon$.
\end{proposition}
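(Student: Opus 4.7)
My approach is to prove the upper bound $\MP - \DFMP \leq 1/2$ via a single universal randomization argument, and then to exhibit a one-parameter family of instances that makes the gap approach $1/2$. For the upper bound, I would use the uniform random cut $D^*$ in which each vertex is placed in $S$ independently with probability $1/2$. Under $D^*$, every edge is cut with probability exactly $1/2$, so $\E_{x \sim D^*} f_x(E_i)/|E_i| = 1/2$ for every group $E_i$, giving $\DFMP \geq 1/2$; combined with the trivial bound $\MP \leq 1$, this yields $\MP - \DFMP \leq 1/2$.

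For tightness, I would let $G_{k,m}$ be the vertex-disjoint union of a complete graph $K_{2k}$ and a bipartite graph $H$ with $m$ edges, and take the partition $\Gamma = \{E(K_{2k}), E(H)\}$. Three short computations then complete the argument. First, because the two components are vertex-disjoint, every cut restricts independently to each component, and by linearity $\max_D \E_{x \sim D} f_x(E_i)/|E_i| = \max_x f_x(E_i)/|E_i|$; hence the best achievable value on $E_1 = E(K_{2k})$ is the max-cut fraction $k^2/\binom{2k}{2} = k/(2k-1)$, while on $E_2 = E(H)$ it is $1$. Since a single deterministic cut attains both maxima simultaneously, $\DFMP = k/(2k-1)$. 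Second, the same independence argument gives $\MP = (k^2 + m)/(k(2k-1) + m)$. Third, routine algebra shows $\MP - \DFMP \to (k-1)/(2k-1)$ as $m \to \infty$, and $(k-1)/(2k-1) \to 1/2$ as $k \to \infty$. Given $\varepsilon > 0$, I would choose $k$ large enough that $(k-1)/(2k-1) > 1/2 - \varepsilon/2$ and then $m$ large enough to push the gap past $1/2 - \varepsilon$.

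The main conceptual obstacle is designing a single instance where $\MP$ is close to $1$ (typically a property of nearly-bipartite graphs) while $\DFMP$ is close to $1/2$ (typically requiring some group to be intrinsically hard to cut, like a complete graph). The vertex-disjoint decomposition resolves this tension cleanly: the bipartite piece $H$ can be made arbitrarily large to drive $\MP$ toward $1$, while the $K_{2k}$ piece lives entirely within a single group whose achievable cut fraction is intrinsically capped at $k/(2k-1)$, pinning $\DFMP$ just above $1/2$.
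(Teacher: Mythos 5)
Your proof is correct and follows essentially the same route as the paper: the upper bound via the uniform random cut (which forces $\DFMP \geq 1/2$) together with $\MP \leq 1$, and tightness via a family pairing a $K_{2k}$ group (whose cut fraction is capped at $k/(2k-1)$) with an arbitrarily large bipartite group that drives $\MP$ toward $1$. The only cosmetic difference is that the paper attaches a long path tail to a vertex of $K_{2k}$ rather than taking a vertex-disjoint union with a bipartite graph; both yield the same limit $(k-1)/(2k-1) \to 1/2$.
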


We similarly provide a tight bound on the gap between the $\SFMP$ and the $\DFMP$ for Max-Cut with edge utilities.

\begin{proposition} \label{prop:dfmp-sfmp-edge}
For any $\varepsilon > 0$, there exists a graph $G$ and edge partition $\Gamma$ for which $\DFMP - \SFMP > 1-\varepsilon$.
\end{proposition}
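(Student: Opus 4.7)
The plan is to exhibit a family of small but highly ``frustrated'' instances. The natural candidate is an odd cycle $C_{2k+1}$, with the edge partition that places each edge into its own singleton group. Since every group then has size $1$, the proportion $f_x(E_i)/|E_i|$ reduces to the indicator $X_{e_i}$, so $\SFMP$ and $\DFMP$ become, respectively, the static and dynamic maximin ``probability-of-being-cut'' over the $2k+1$ edges.

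First I would argue $\SFMP(C_{2k+1}, \Gamma) = 0$. Since $C_{2k+1}$ is not bipartite, any cut $S \subseteq V$ must leave at least one cycle edge uncut: every cut in a graph intersects each cycle in an even number of edges, but $|E(C_{2k+1})| = 2k+1$ is odd, so there is some $e_j$ with $X_{e_j}(S) = 0$, giving $\min_i f_S(E_i)/|E_i| = 0$. (This also follows from the second proposition preceding the one to be proved, since this is a single-edge partition of a non-bipartite graph.)

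Next I would lower bound $\DFMP(C_{2k+1}, \Gamma) \geq \frac{2k}{2k+1}$ by exhibiting an explicit distribution. For each edge $e_j$ of the cycle, there is a unique cut $S_j$ (up to complementation) whose uncut edges are exactly $\{e_j\}$: removing $e_j$ disconnects the cycle into a path of even length $2k$, and the unique proper $2$-coloring of that path determines $S_j$. Let $D$ be the uniform distribution over $\{S_1, \ldots, S_{2k+1}\}$. By symmetry, for each edge $e_i$ exactly one of these cuts, namely $S_i$, fails to cut $e_i$, so
\[
\E_{S \sim D}\, X_{e_i}(S) \;=\; \Pr_{S \sim D}[S \neq S_i] \;=\; \frac{2k}{2k+1}.
\]
Since this bound holds for every singleton group $E_i = \{e_i\}$, we obtain $\DFMP(C_{2k+1}, \Gamma) \geq 2k/(2k+1)$.

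Combining the two bounds gives $\DFMP - \SFMP \geq 2k/(2k+1) = 1 - 1/(2k+1)$; choosing $k$ large enough that $1/(2k+1) < \varepsilon$ finishes the proof. I do not anticipate a major technical obstacle; the only subtlety is the verification that the cuts $S_j$ are well-defined and that their uniform mixture equalizes the per-edge cut probabilities, both of which follow from the cyclic symmetry of $C_{2k+1}$ together with the parity constraint that cuts intersect cycles in even-sized sets.
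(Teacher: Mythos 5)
Your proposal is correct and matches the paper's own proof essentially exactly: both use the odd cycle $C_{2n+1}$ with singleton edge groups, observe that $\SFMP = 0$ by the odd-cycle parity argument, and take the uniform distribution over the $2n+1$ cuts that each miss exactly one edge to get $\DFMP \geq 1 - 1/(2n+1)$. No gaps.
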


Finally, we discuss the sub-problem bound (Theorem~\ref{thm:subproblem-bounds}) in the context of Max-Cut with edge utilities. 

\begin{corollary} \label{cor:subgraph-edge}
Let $G=(V,E)$ be a graph with edge partition $\Gamma = \{E_1,\ldots,E_{\ng}\}$, and let $H=(V,E')$ be a subgraph of $G$ whose edge set is $\bigcup_{j=1}^k E_{s_j}$, where $s_j \in [\ng]$ for all $j$. Then, $\DFMP(G,\Gamma) \leq \MP(H).$
\end{corollary}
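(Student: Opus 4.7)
The plan is to read this as a direct instance of Theorem~\ref{thm:subproblem-bounds}, specifically the second (proportion) inequality, with the perturbation terms $\delta(\gs_i')$ all equal to zero. First I would set up the two optimization problems required by the theorem. The outer problem is Max-Cut on $G$ with ground set $\gs = E$, partition $\Gamma$, and utilities $f_x(F) = \sum_{e \in F} X_e$, which were already shown to be nonnegative (indeed additive). The sub-problem is Max-Cut on $H = (V, E')$, with ground set $\gs' = E' = \bigcup_{j=1}^{k} E_{s_j}$, partition $\Gamma' = \{E_{s_1}, \ldots, E_{s_k}\} \subseteq \Gamma$, decision set still $\mathcal{X} = \mathcal{P}(V)$ (since a cut of $G$ restricts naturally to a cut of $H$), and utilities $f'_x(F) = \sum_{e \in F} X_e$ for $F \subseteq E'$. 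These $f'_x$ are additive and hence superadditive, so the structural hypotheses of the theorem are met.

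Next I would observe the key fact that $f_x$ and $f'_x$ agree on every subset of $E'$. In particular, $f_x(E'_i) = f'_x(E'_i)$ for each $i \in [k]$, so one can take $\delta(E'_i) = 0$ for all $i$ in the statement of Theorem~\ref{thm:subproblem-bounds}. Substituting into the second inequality of that theorem immediately gives
\[
\DFMP(G,\Gamma) \;\leq\; \MP' \,+\, \frac{\sum_{i=1}^{k} 0}{\sum_{i=1}^{k} |E_{s_i}|} \;=\; \MP',
\]
where $\MP'$ is the maximum proportion of the sub-problem.

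Finally I would identify $\MP'$ with $\MP(H)$. By definition, $\MP' = \max_{x \in \mathcal{X}} f'_x(E')/|E'|$, and since $f'_x(E') = \sum_{e \in E'} X_e$ counts exactly the edges of $H$ crossing the cut $x$, this equals $\MC(H)/|E(H)| = \MP(H)$ by the analogue of Observation~\ref{obs:MV-is-MC} applied to $H$. Chaining the two equalities yields $\DFMP(G,\Gamma) \leq \MP(H)$, as desired.

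There is no real obstacle here; the only thing to double-check is that the decision set $\mathcal{X}$ is indeed shared between the two problems (cuts of $V$), so that the $\max$ defining $\MP'$ ranges over the same objects as the $\max$ defining $\MP(H)$, and that reducing from $E$ to $E'$ does not alter the per-edge utility. Both points follow immediately from the edge-utility formulation in Section~\ref{sec:edgeUtilities}.
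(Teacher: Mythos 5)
Your proposal is correct and matches the paper's own argument, which likewise invokes Theorem~\ref{thm:subproblem-bounds} with $f'_S(F) = f_S(F)$ (so that all $\delta(\gs_i') = 0$); you simply spell out the verification of the hypotheses and the identification $\MP' = \MP(H)$ in more detail than the paper does.
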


\begin{proof}
This follows immediately from Theorem~\ref{thm:subproblem-bounds} by setting $f'_S(F) = f_S(F)$ for every cut $S$ and every subset of edges $F$.
\end{proof}

The above corollary implies that the $\DFMP$ of a graph is bounded by the $\MP$ of each of its edge groups. For example, this gives us the following bound:

\begin{corollary}
    Let $G$ be a graph with edge partition $\Gamma=\{E_1,\ldots,E_{\ng}\}$. If, for any $i$, $E_i$ is a 3-cycle or an edge-disjoint union of 3-cycles, then $\DFMP \leq 2/3$.
\end{corollary}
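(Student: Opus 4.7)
The plan is to apply Corollary~\ref{cor:subgraph-edge} to the single edge group $E_i$ that is guaranteed to be a 3-cycle or an edge-disjoint union of 3-cycles, and then to bound the maximum proportion of the resulting subgraph by $2/3$ using the standard observation that an odd cycle cannot be entirely cut.

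More concretely, I would first set $H = (V, E_i)$ in Corollary~\ref{cor:subgraph-edge} (i.e., $k=1$, $s_1 = i$), which immediately gives $\DFMP(G,\Gamma) \leq \MP(H)$. It therefore suffices to show that $\MP(H) \leq 2/3$ for any graph $H$ whose edge set is an edge-disjoint union of 3-cycles $C^{(1)}, \dots, C^{(t)}$ (the case of a single 3-cycle being $t=1$).

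The main observation is that for any cut $S \subseteq V$ and any triangle $C^{(j)}$, at most two of the three edges of $C^{(j)}$ can cross $S$, since a triangle is an odd cycle and no odd cycle can be entirely bichromatic under any 2-coloring of its vertices. Summing over the $t$ edge-disjoint triangles yields
\begin{equation*}
    \sum_{e \in E_i} X_e(S) \;=\; \sum_{j=1}^{t} \sum_{e \in C^{(j)}} X_e(S) \;\leq\; 2t,
\end{equation*}
while $|E_i| = 3t$. Dividing, $\MP(H) = \max_{S} \frac{1}{|E_i|} \sum_{e \in E_i} X_e(S) \leq 2t/(3t) = 2/3$, and combining with the inequality from Corollary~\ref{cor:subgraph-edge} completes the proof.

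There is no real obstacle here; the only subtlety worth flagging is that the triangles in the edge-disjoint union are allowed to share vertices, but the argument is unaffected because the per-triangle bound of $2$ cut edges depends only on the odd-cycle property, not on vertex-disjointness.
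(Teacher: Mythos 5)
Your proof is correct and follows exactly the route the paper intends: apply Corollary~\ref{cor:subgraph-edge} with $H=(V,E_i)$ to get $\DFMP(G,\Gamma)\leq \MP(H)$, then bound $\MP(H)\leq 2/3$ because any cut crosses at most two of the three edges of each triangle, and the edge-disjointness lets you sum this bound over the triangles. The remark that vertex-sharing among the triangles is harmless is a nice touch; nothing is missing.
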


The bound in Corollary~\ref{cor:subgraph-edge} begs the question: does there always exist a subgraph $H$ (of the form described in Corollary~\ref{cor:subgraph-edge}) for which $\DFMP(G) = \MP(H)$? We demonstrate an example in Appendix \ref{sec:proofs_edgeUtilities} which answers this question in the negative.

In the next section, we describe algorithms for the fair Max-Cut problem.

\subsubsection{Algorithms}
\label{subsubsec:edge-algorithms}

Since the $\DFMP$ and $\SFMP$ problems reduce to Max-Cut in the case where $\Gamma = \{E\}$, these problems are NP-hard. That said, we discuss a few algorithms in order to showcase the challenges of algorithm design for this problem.

First, consider the naive random cut, wherein each vertex is included in the cut $S$ independently with probability 1/2. As shown in Appendix~\ref{sec:naiveRandomCut}, under this algorithm, $\E_S f_S(E_i)/|E_i| = 1/2$ for each edge group $E_i$. In some cases, this could be viewed as a fair outcome, since each group is treated equally in expectation. However, as shown in Appendix~\ref{sec:naiveRandomCut}, different groups face different variances in $f_S(E_i)/|E_i|$, which could be viewed as unfair. Moreover, from an overall utility standpoint, this algorithm is far from optimal.

Another algorithm of note is the Goemans-Williamson (GW) algorithm \cite{goemans1995improved} which yields an expected approximation ratio of $\alpha \approx 0.878$, i.e., for all graphs $G$ ,  $\E[\text{ALG}(G)] \geq \alpha \MC(G)$ where $\E[\text{ALG}(G)]$ is the expected number of edges returned by the algorithm on graph $G$. In summary, the GW algorithm solves an semidefinite programming (SDP) relaxation of the Max-Cut objective, yielding $n$ points (corresponding to vertices) on the surface of a sphere in $n$-dimensions, then, a random hyperplane through the sphere's center is used to partition the vertices into two groups. Amongst all polynomial-time approximation algorithms for Max-Cut, the GW algorithm has the best-known approximation ratio and furthermore, under the Unique Games Conjecture,  $\alpha \approx 0.878$ is the best constant-factor approximation ratio that one can hope to achieve, assuming $\mathbf{P} \neq \mathbf{NP}$ \cite{khot2007optimal}. Despite this, the distribution of cuts returned by the GW algorithm are catastrophic in terms of dynamic fairness, in some cases, yielding a dynamic fairness proportion value of 0 for some graph instances (with non-zero \DFMP) as seen in Example \ref{ex:GW_gives_bad_DF} in Appendix \ref{sec:proofs_edgeUtilities}.

Finally, we note that the $\DFMP$ problem with graph $G$ and partition $\Gamma = \{E_1,\ldots,E_{\ng}\}$ can be solved exactly using the following linear program:
\begin{align*}
    \text{max} \ & z \\
    \text{s.t.} \  & z \leq \mathbb{E} \bigg[ \frac{f_S(E_i)}{|E_i|} \bigg] & \mbox{for every } i \in [\ng] \\
    & \sum_{S \subseteq V} p_S = 1\\
    & p_S \geq 0 & \mbox{for every } S \subseteq V. 
\end{align*}
The variables of this LP are $p_S$ (for each $S \subseteq V$), which represents the probability of selecting the cut $S$, and an auxiliary variable $z$, which incorporates the inner minimization. While this approach is perfectly fair with respect to the $\DFMP$ objective, it is inefficient, since the LP has exponentially many variables. We present a concrete example of this linear program on a 4-node graph in Example \ref{ex:lp_edge_utilities} in Appendix \ref{sec:proofs_edgeUtilities}.

\subsection{Node Utilities}
\label{sec:nodeUtilities}

We now express fair Max-Cut with node utilities using the notation in Section \ref{sec:generalBounds}. As in the edge utility case, we still have $\mathcal{X} = \mathcal{P}(V)$ as our decision set. Our set of individuals $\gs$ (as described in Section \ref{sec:generalBounds}) is now $V$ (the vertex set of the input graph $G$) and we assume that we are given a partition of $V$ with $\ng$ disjoint groups, i.e., $V = V_1 \sqcup V_2 \sqcup \cdots \sqcup V_\ng$.

Now, each individual corresponding to $v \in V$ obtains utility proportional to the number of its incident edges that are crossing the cut. To aid in defining this utility, for each $v \in V$, we define $N(v)$ as the set of vertices that are neighbors of $v$, i.e., $N(v) = \{u \in V: (u,v) \in E\}$. Thus, using the notation from Section \ref{sec:generalBounds},  for every $x \in \mathcal{X}$ and every $v \in V$, we have
\begin{equation}
\label{eqn:nodeUtility}
f_x(\{v\}) = \frac{1}{\Delta(G)}\sum_{u \in N(v)} X_{uv},\end{equation}
where $\Delta(G)$ is the maximum degree of any vertex in $G$ and $X_{uv} := X_e$ where $e =(u,v)$ as defined in Section \ref{sec:edgeUtilities}. As an abuse of notation, we will often write $f_x(v)$ for $f_x(\{v\})$. For every $x \in \mathcal{X}$, the general set function is defined by simply summing the individual utilities, i.e., for any $U \subseteq V$,
\begin{equation} \label{eqn:node-utility-def}
f_x(U) = \sum_{v \in U} f_x(v) = \frac{1}{\Delta(G)} \sum_{v \in U} \sum_{u \in N(v)} X_{uv}.
\end{equation}
We define the optimization problems of Section~\ref{sec:notation} with respect to these utility functions. 

Defined this way, one can show that the MV (maximum value) is proportional to the Max-Cut objective, i.e., Max-Cut maximizes the social welfare of the group through the lens of utilitarianism. See Appendix~\ref{sec:nodeUtilities} for the proof.

\begin{proposition}
\label{thm:nodeUtilPropToMaxcut}
    For fair Max-Cut with node utilities, we have,$$\MV = \frac{2}{\Delta(G)}\cdot \MC(G).$$    
\end{proposition}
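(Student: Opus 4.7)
The plan is to unpack the definition of $f_x(V)$ given in equation (\ref{eqn:node-utility-def}) and exploit a standard double-counting identity to rewrite the resulting sum in terms of the edge-based Max-Cut objective from equation (\ref{eqn:edgeCrossCutIndicator}).

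Concretely, I would start from
\[
\MV = \max_{x \in \mathcal{X}} f_x(V) = \max_{S \subseteq V} \frac{1}{\Delta(G)} \sum_{v \in V} \sum_{u \in N(v)} X_{uv}(S),
\]
using $\mathcal{X} = \mathcal{P}(V)$ and writing $x = S$ for clarity. The key step is to observe that each edge $e = (u,v) \in E$ contributes the indicator $X_{uv}(S)$ to the inner double sum exactly twice — once when we sum over $u$'s neighbors picking up $v$, and once when we sum over $v$'s neighbors picking up $u$. Hence
\[
\sum_{v \in V} \sum_{u \in N(v)} X_{uv}(S) = 2 \sum_{e \in E} X_e(S).
\]

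Substituting this identity and pulling the constant $\tfrac{2}{\Delta(G)}$ out of the maximum yields
\[
\MV = \frac{2}{\Delta(G)} \max_{S \subseteq V} \sum_{e \in E} X_e(S) = \frac{2}{\Delta(G)} \cdot \MC(G),
\]
where the last equality is just the definition of $\MC(G)$ in (\ref{eqn:edgeCrossCutIndicator}). There is essentially no obstacle here; the only thing to be careful about is confirming that the double counting uses the fact that $N(v)$ is exactly the set of neighbors (so each edge's two endpoints each see the other), and that $X_{uv} = X_{vu}$ as an indicator of an undirected edge crossing the cut, so both contributions to the sum are equal.
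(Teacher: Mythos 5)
Your proof is correct and follows essentially the same route as the paper: both unpack $f_x(V)$ into the double sum $\sum_{v}\sum_{u \in N(v)} X_{uv}$ and observe that each crossing edge is counted exactly twice (the paper phrases this as the handshaking lemma applied to the cut subgraph), then pull out the constant $2/\Delta(G)$. No gaps.
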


\begin{proof}[Proof of Proposition \ref{thm:nodeUtilPropToMaxcut}]
    Observe,
    \[
 \MV = \max_{x \in \mathcal{X}} f_x(V)
        = \max_{x \in \mathcal{X}} \frac{1}{\Delta(G)} \sum_{v \in V} \sum_{u \in N(v)} X_{uv} 
        \overset{(*)}{=} \max_{x \in \mathcal{X}} \frac{2}{\Delta(G)} \sum_{e \in E} X_e
        = \frac{2}{\Delta(G)} \MC(G),
    \]
where the equation with $(*)$ follows by considering the well-known handshaking lemma applied to the subgraph consisting of edges that cross the cut.
\end{proof}

Additionally, because of the factor of $\frac{1}{\Delta(G)}$ in the node utility, the maximum proporition $\MP$ is a ``true'' proportion in the sense that it lies between 0 and 1.

\begin{proposition}
\label{thm:utilBetweenZeroAndOne}
    For fair Max-Cut with node utilities, we have $0 \leq \MP \leq 1$.
\end{proposition}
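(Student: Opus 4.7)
The plan is to verify both inequalities by direct inspection of the definition of $f_x$ given in (\ref{eqn:node-utility-def}). The lower bound is immediate: each indicator $X_{uv}$ is nonnegative, so $f_x(V)$ is nonnegative, and dividing by $|V|$ preserves nonnegativity; thus $\MP \geq 0$.

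For the upper bound, the key observation is that the per-vertex utility $f_x(\{v\})$ is itself bounded above by $1$. Specifically, for any $v \in V$ and any $x \in \mathcal{X}$,
\[
f_x(\{v\}) = \frac{1}{\Delta(G)} \sum_{u \in N(v)} X_{uv} \leq \frac{\deg(v)}{\Delta(G)} \leq 1,
\]
where the first inequality uses $X_{uv} \in \{0,1\}$ and the sum has $\deg(v)$ terms, and the second uses the definition of $\Delta(G)$ as the maximum degree. Averaging this bound over all $v \in V$ gives
\[
\frac{f_x(V)}{|V|} = \frac{1}{|V|} \sum_{v \in V} f_x(\{v\}) \leq 1.
\]
Taking the maximum over $x \in \mathcal{X}$ yields $\MP \leq 1$.

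There is no real obstacle here; the normalization by $\Delta(G)$ in the definition of the node utility was chosen precisely so that each individual vertex's utility lies in $[0,1]$, and this property passes to the average. I would present the proof as two short displayed inequalities, one for the lower bound and one for the upper bound, with the handshake-style averaging step being the only substantive move.
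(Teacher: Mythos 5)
Your proof is correct and follows essentially the same route as the paper's: both bound each per-vertex utility by $1$ via $\sum_{u \in N(v)} X_{uv} \leq \deg(v) \leq \Delta(G)$ and then average over $V$. No gaps.
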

\begin{proof}
    It is clear that $\MP \geq 0$. Observe that for any $v \in V$, $\sum_{u \in N(v)} X_{uv} \leq \Delta(G)$ and hence $f_x(v) = \frac{1}{\Delta(G)}\sum_{u \in N(v)} X_{uv} \leq 1$. Thus,
    $$\MP = \max_{x \in \mathcal{X}} \frac{f_x(V)}{|V|} = \max_{x \in \mathcal{X}} \frac{\sum_{v \in V} f_x(v)}{|V|} \leq \max_{x \in \mathcal{X}} \frac{\sum_{v \in V} 1}{|V|} = 1.\eqno\qedhere$$
\end{proof}
For any graph, the node utility defined in Equation \ref{eqn:nodeUtility} may not be the only notion of utility that one may want to consider. For example, one may want to define alternative node utilities $\{\tilde{f}_x\}_{x \in \mathcal{X}}$ in a way that is relative to the \emph{degree} of each vertex, i.e., 
$$\tilde{f}_x(\{v\}) = \frac{1}{\text{deg}(v)}\sum_{u \in N(v)} X_{uv},$$
and for all $A \subseteq V$, $\tilde{f}_x(A) = \sum_{v \in A} \tilde{f}_x(v).$

If we choose $\{\tilde{f}_x\}_{x \in \mathcal{X}}$ instead of $\{f_x\}_{x \in \mathcal{X}}$ for our set functions, then the Maximum Proportion, $\MP$, still lies between 0 and 1 as in Proposition \ref{thm:utilBetweenZeroAndOne}; however, the Maximum Value, $\MV$, is no longer necessarily\footnote{It should be noted that in the special case that $G$ is a regular graph and the set functions  $\{\tilde{f}_x\}_{x \in \mathcal{X}}$ are used, then it is once again the case that $\MV \propto \MC$  as $\tilde{f}_x = f_x$ for every $x \in \mathcal{X}$.} proportional to the maximum cut (as in Proposition \ref{thm:nodeUtilPropToMaxcut}), meaning that maximin fairness with set functions $\{\tilde{f}_x\}_{x \in \mathcal{X}}$ could not as easily be interpreted as some notion of ``fair Max-Cut.'' Nonetheless, maximin fairness with set functions $\{\tilde{f}_x\}_{x \in \mathcal{X}}$ may still be of interest in its own right.

\subsubsection{Comparing Optimal Static and Random Solutions}

As we did in Section~\ref{sec:edgeUtilities} for the case of edge utilities, we now compare static and dynamic fair solutions to Max-Cut with node utilities. As we will see, the bounds obtainable for the case of node utilities tend to depend more heavily on the structure of the graph than those for the case of edge utilities.

To begin, we show that for regular bipartite graphs, all three optimization problems equal 1. Note that the analogous result for edge utilities (Theorem~\ref{thm:propsEqualWhenBipartite}) did not require regularity.

\begin{proposition}
    \label{thm:vertex_propsEqualWhenBipartite}
    If $G = (V,E)$ is bipartite and regular, and $\Gamma = \{V_1,\ldots,V_\ng\}$ is a vertex partition, then $\SFMP(G,\Gamma) = \DFMP(G,\Gamma) = \MP(G) = 1$ when vertex utilities are used.
\end{proposition}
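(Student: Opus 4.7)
The plan is to exhibit a single cut $S$ (namely, one side of the bipartition of $G$) which achieves the value $1$ for both the $\SFMP$ and $\MP$ objectives, and then to sandwich $\DFMP$ between these two by invoking Corollary~\ref{cor:general-bound}.

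First, I would let $d = \Delta(G)$ and observe that since $G$ is $d$-regular, every vertex has exactly $d$ neighbors. Let $S \subseteq V$ be one side of the bipartition of $G$. Since $G$ is bipartite with respect to $S$, every edge of $G$ crosses the cut, so $X_{uv} = 1$ for every $(u,v) \in E$. Plugging this into the definition of the node utility (Equation \ref{eqn:nodeUtility}), I get
\[
f_S(\{v\}) \;=\; \frac{1}{\Delta(G)} \sum_{u \in N(v)} X_{uv} \;=\; \frac{1}{d} \cdot d \;=\; 1
\]
for every $v \in V$. Summing over any vertex subset $A \subseteq V$ gives $f_S(A) = |A|$, hence $f_S(A)/|A| = 1$ for every nonempty $A$.

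Applying this to each group in the partition, $f_S(V_i)/|V_i| = 1$ for all $i \in [\ng]$, so in particular
\[
\SFMP(G,\Gamma) \;\geq\; \min_{i \in [\ng]} \frac{f_S(V_i)}{|V_i|} \;=\; 1.
\]
Likewise, taking $A = V$ yields $\MP(G) \geq f_S(V)/|V| = 1$. Combining these lower bounds with the upper bound $\MP(G) \leq 1$ from Proposition \ref{thm:utilBetweenZeroAndOne}, I obtain $\MP(G) = 1$ and $\SFMP(G,\Gamma) = 1$ (the latter being at most $\MP(G)$ by Corollary~\ref{cor:general-bound}).

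Finally, the chain $\SFMP(G,\Gamma) \leq \DFMP(G,\Gamma) \leq \MP(G)$ from Corollary~\ref{cor:general-bound} (whose hypothesis of nonnegativity and superadditivity of $f_x$ holds here, since $f_x$ is a nonnegative sum of indicator contributions) forces $\DFMP(G,\Gamma) = 1$ as well. There is no genuine obstacle: the key observation is simply that regularity makes the normalizing constant $\Delta(G)$ match each vertex's degree, so the bipartition achieves the utility ceiling $1$ uniformly across all vertices and hence across all groups simultaneously.
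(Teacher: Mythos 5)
Your proof is correct and follows essentially the same route as the paper: take one side of the bipartition as the cut, use regularity so that every vertex attains utility $\deg(v)/\Delta(G)=1$, and conclude that all three quantities equal $1$. You are slightly more explicit than the paper in invoking the upper bound $\MP\le 1$ and the chain $\SFMP\le\DFMP\le\MP$ to pin down $\DFMP$, but the underlying argument is identical.
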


\begin{proof}
Let $S$ be a bipartition. Then for any vertex group $V_i$, we have that
\[
f_S(V_i) = \frac{1}{|V_i|} \sum_{v \in V_i} \frac{\deg v}{\Delta(G)} \overset{(a)}{=} \frac{1}{|V_i|} \sum_{v \in V_i} 1 = 1,
\]
where (a) follows from the regularity assumption. It follows that $\SFMP = \DFMP = 1$. Moreover, we have that $\MP(G) \geq f_S(V) = \frac{1}{|V|}\sum_{v \in V} \deg (v)/\Delta(G) = 1$.
\end{proof}

Recall that for non-bipartite graphs with a single-edge partition, the static fair maximum proportion with respect to edge utilities is 0. This is not true in the case of node utilities; instead, we get a weaker bound. In particular, note that for a non-bipartite graph $G$ and any cut $S$, there must be some node for which not every incident edge is cut. We use this observation to obtain the following bound:

\begin{proposition}
Let $G$ be a non-bipartite graph with a single-vertex partitioning. Then $\SFMP \leq \frac{\Delta(G)-1}{\Delta(G)}$.
\end{proposition}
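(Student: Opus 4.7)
The plan is a short direct argument exploiting the fact that a non-bipartite graph cannot have a cut that crosses every edge. Since we are in the single-vertex partition regime, each group is a singleton $\{v\}$ and the relevant quantity for any cut $S$ is simply $\min_{v \in V} f_S(v)$, with $\SFMP = \max_{S \subseteq V} \min_{v \in V} f_S(v)$. The strategy is to show that, for every cut $S$, there exists at least one vertex $v$ whose per-vertex utility is at most $(\Delta(G)-1)/\Delta(G)$, and then take the max over $S$.

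The key observation is that $G$ being non-bipartite is equivalent to saying that no vertex bipartition $S \subseteq V$ cuts all edges of $G$. Hence, for any cut $S$, there is at least one edge $(u,v) \in E$ with both endpoints on the same side of $S$, i.e., $X_{uv} = 0$. Focusing on this vertex $v$, I would bound
\[
\sum_{w \in N(v)} X_{vw} \;\leq\; \deg(v) - 1 \;\leq\; \Delta(G) - 1,
\]
where the first inequality uses that the edge $(u,v)$ contributes $0$ to the sum and the second uses the definition of $\Delta(G)$. Dividing by $\Delta(G)$ yields $f_S(v) \leq (\Delta(G)-1)/\Delta(G)$, and therefore $\min_{w \in V} f_S(w) \leq (\Delta(G)-1)/\Delta(G)$.

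Since the cut $S$ was arbitrary, taking the maximum over all $S \subseteq V$ preserves the inequality, giving
\[
\SFMP \;=\; \max_{S \subseteq V} \min_{v \in V} f_S(v) \;\leq\; \frac{\Delta(G)-1}{\Delta(G)},
\]
which is the desired bound. There is no real obstacle here: the argument is essentially a one-line consequence of the definition of non-bipartiteness combined with the normalization by $\Delta(G)$ in the node-utility definition (\ref{eqn:nodeUtility}). The only thing to be careful about is to identify the vertex $v$ via an uncut edge (rather than trying to bound a global quantity), since the $\SFMP$ is governed by a single worst-off vertex, so exhibiting one such vertex suffices.
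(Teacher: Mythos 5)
Your argument is correct and is essentially the same as the paper's: both identify an uncut edge $(u,v)$ guaranteed by non-bipartiteness, bound $\sum_{w \in N(v)} X_{vw} \leq \deg(v)-1 \leq \Delta(G)-1$ since the term $X_{uv}$ vanishes, and then pass through the min over singleton groups and the max over cuts. No gaps.
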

\begin{proof}
    Let $(S, V \setminus S)$ be any cut. Since $G$ is non-bipartite, then there must be two adjacent vertices on the same side of the cut, say $u,v \in S$. Note that for $v$, its node utility is
    \begin{align*}f_S(v) &= \frac{1}{\Delta(G)} \sum_{w \in N(v)} X_{vw} = \frac{1}{\Delta(G)}\bigg(\underbrace{X_{uv}}_{=0} + \sum_{w \in N(v)\setminus \{u\}} \underbrace{X_{vw}}_{\leq 1}\bigg) 
    \leq  \frac{\Delta(G)-1}{\Delta(G)}.
    \end{align*}
    Recall that $\Gamma$ is a single-vertex partitioning, thus, $$\min_{i \in [\ng]} \frac{f_S(V_i)}{|V_i|} = \min_{w \in V} \frac{f_S(w)}{1} \leq  f_S(v)  \leq \frac{\Delta(G)-1}{\Delta(G)}.$$ Since $S$ was an arbitrary cut, then $\SFMP =  \max_{S \in \mathcal{X}}\min_{i \in [\ng]} \frac{f_S(V_i)}{|V_i|} \leq \frac{\Delta(G)-1}{\Delta(G)}.$
\end{proof}

As we did in Section~\ref{sec:edgeUtilities}, we will now discuss the implications of the general bounds of Section~\ref{sec:generalBounds} on Max-Cut with node utilities. Since the node utility functions $f_x$ defined in (\ref{eqn:node-utility-def}) are nonnegative and superadditive, we have the following:

\begin{corollary}
For the Max-Cut problem with graph $G$ and node partition $\Gamma = \{V_1,\ldots,V_{\ng}\}$, we have that $\SFMP(G,\Gamma) \leq \DFMP(G,\Gamma) \leq \MP(G).$
\end{corollary}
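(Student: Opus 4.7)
The plan is to invoke Corollary~\ref{cor:general-bound} directly, so the task reduces to verifying its two hypotheses (nonnegativity and superadditivity) for the node-utility functions $f_x$ defined in equation~(\ref{eqn:node-utility-def}).

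First I would check nonnegativity. For any cut $x \in \mathcal{X}$ and any $U \subseteq V$, we have
\[
f_x(U) = \frac{1}{\Delta(G)} \sum_{v \in U} \sum_{u \in N(v)} X_{uv},
\]
and each indicator $X_{uv}$ takes values in $\{0,1\}$ while $\Delta(G) > 0$ for any graph with at least one edge (the edge-less case is trivial), so $f_x(U) \geq 0$.

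Next I would verify superadditivity. Since the definition of $f_x$ expresses $f_x(U)$ as a sum of the individual contributions $f_x(v)$ over $v \in U$, additivity over disjoint unions is immediate: for any disjoint $A, B \subseteq V$,
\[
f_x(A \cup B) = \sum_{v \in A \cup B} f_x(v) = \sum_{v \in A} f_x(v) + \sum_{v \in B} f_x(v) = f_x(A) + f_x(B),
\]
which in particular implies $f_x(A \cup B) \geq f_x(A) + f_x(B)$, i.e., superadditivity. (Note that while $f_x(\{v\})$ itself involves incident edges to other vertices, this does not affect additivity, because the sum defining $f_x(U)$ only iterates over vertices in $U$ itself.)

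With both hypotheses confirmed, I would then invoke Corollary~\ref{cor:general-bound} applied to the ground set $V$, partition $\Gamma = \{V_1,\ldots,V_{\ng}\}$, and utility functions $\{f_x\}_{x \in \mathcal{X}}$, yielding the chain $\SFMP(G,\Gamma) \leq \DFMP(G,\Gamma) \leq \MP(G)$. No real obstacle arises here; this corollary is essentially a one-line specialization of the general framework, and the only content is checking that the node-utility definition respects the structural hypotheses. The only minor subtlety worth flagging is ensuring that the additivity argument uses $f_x(v)$ as a single summand per vertex (rather than attempting a decomposition over incident edges, which would risk double-counting across groups).
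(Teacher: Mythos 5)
Your proposal is correct and matches the paper's approach exactly: the paper likewise notes that the node-utility functions in (\ref{eqn:node-utility-def}) are nonnegative and superadditive (indeed additive over disjoint vertex sets) and then cites Corollary~\ref{cor:general-bound}. Your explicit verification of the two hypotheses is just a spelled-out version of the one-sentence justification the paper gives.
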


In contrast to the case of edge utilities, the gap between the maximum proportion and the dynamic fair maximum proportion can be arbitrarily large in the case of node utilities.

\begin{proposition} \label{prop:mp-dfmp-node}
For any $\varepsilon > 0$, there exists a graph $G$ and node partition $\Gamma$ for which $\MP(G) - \DFMP(G,\Gamma) > 1-\varepsilon$.
\end{proposition}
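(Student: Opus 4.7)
The plan is to exhibit, for any $\varepsilon > 0$, a concrete graph $G$ together with a node partition $\Gamma$ for which $\MP(G)$ is close to $1$ while $\DFMP(G,\Gamma)$ is close to $0$. The mechanism is to exploit the $1/\Delta(G)$ normalization in the node utility (Equation~\ref{eqn:nodeUtility}): a pendant vertex $v$ in a graph of high maximum degree can contribute at most $1/\Delta(G)$ to its group's per-capita objective, no matter which cut is chosen, while the rest of the graph can still be made to cross the cut almost perfectly.

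Concretely, I would set $G_n$ to be $K_{n,n}$ together with one extra pendant vertex $v$ attached by a single edge to a fixed vertex $u$ of the $K_{n,n}$, and take the partition $\Gamma_n = \{V_1, V_2\}$ with $V_1 = \{v\}$ and $V_2 = V(G_n) \setminus \{v\}$. Observe that $\Delta(G_n) = n+1$, attained uniquely at $u$.

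First I would upper bound $\DFMP$. Since $\deg(v) = 1$, for every cut $x$ we have $f_x(v) \in \{0,\, 1/(n+1)\}$, so for any distribution $D$ over cuts, $\E_{x \sim D} f_x(V_1)/|V_1| \leq 1/(n+1)$. Taking the min over the two groups and then the max over $D$ gives $\DFMP(G_n,\Gamma_n) \leq 1/(n+1)$. Next I would lower bound $\MP$ by evaluating the objective on the canonical bipartition $S$ of $K_{n,n}$ with $v$ placed on the side opposite $u$. Under $S$ every edge of $G_n$ is cut, so $f_S(u) = 1$, $f_S(w) = n/(n+1)$ for every $w \in V(K_{n,n}) \setminus \{u\}$, and $f_S(v) = 1/(n+1)$. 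A short routine sum and a division by $|V(G_n)| = 2n+1$ gives $\MP(G_n) \geq f_S(V)/|V| = 2(n^2+1)/[(n+1)(2n+1)]$, which tends to $1$ as $n \to \infty$.

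Combining the two bounds, $\MP(G_n) - \DFMP(G_n,\Gamma_n) \to 1$, so any desired gap strictly below $1$ can be achieved by taking $n$ sufficiently large in terms of $\varepsilon$. There is no serious technical obstacle; the one design decision worth flagging is that neither a bipartite graph alone nor an isolated low-degree vertex alone suffices. The construction must combine a dense regular-ish bipartite part (to push $\MP$ toward $1$) with a singleton low-degree group (to force $\DFMP$ toward $0$ via the $\Delta(G)$ normalization), and the pendant attachment is the cheapest way to couple the two while keeping $\Delta(G)$ on the order of $n$.
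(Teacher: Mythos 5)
Your proof is correct and follows essentially the same strategy as the paper's: both exploit the bound $\DFMP(G,\Gamma) \leq \min_i \frac{1}{|V_i|}\sum_{u \in V_i} \deg(u)/\Delta(G)$ by attaching a low-degree appendage (your single pendant vertex; the paper's $2k$-cycle) to a large $K_{n,n}$ and isolating the low-degree vertices in their own group, while the bipartiteness of the whole construction keeps $\MP$ near $1$. Your variant is marginally simpler but not a genuinely different route.
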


The proof of this proposition follows from the observation that $\DFMP(G) \leq \min_i \frac{\Delta(V_i)}{\Delta(G)}$ and can be found in Appendix~\ref{sec:proofs_nodeUtilities}. On the other hand, in contrast to the bound proved for edge utilities, the gap between $\DFMP(G,\Gamma)$ and $\SFMP(G,\Gamma)$ cannot be arbitrarily large in the case of node utilities. In particular, the gap is bounded by 1/2:

\begin{proposition} \label{prop:dfmp-minus-sfmp-node}
For any $\varepsilon > 0$, there exists a graph $G$ and node partition $\Gamma$ for which $\DFMP(G,\Gamma) - \SFMP(G,\Gamma) > \frac{1}{2}-\varepsilon$. Moreover, for any graph $G$ and node partition $\Gamma$, we have that $\DFMP(G,\Gamma) - \SFMP(G,\Gamma) \leq \frac{1}{2} \big( \min_i \frac{1}{|V_i|} \sum_{u \in V_i} \frac{\deg(u)}{\Delta(G)} \big) \leq \frac{1}{2}$.
\end{proposition}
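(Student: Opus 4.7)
My plan is to establish the upper bound via two independent inequalities---$\DFMP \leq \min_i M_i$ trivially, and $\SFMP \geq \frac{1}{2}\min_i M_i$ via a local-search argument on Max-Cut, where $M_i := \frac{1}{|V_i|}\sum_{u \in V_i}\deg(u)/\Delta(G)$---and to exhibit tightness via a family of long odd cycles with singleton groups.

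For the existence claim, I take $G = C_{2n+1}$, the odd cycle on $2n+1$ vertices, with the singleton partition $\Gamma = \{\{v_1\},\ldots,\{v_{2n+1}\}\}$. Then $\Delta(G) = 2$ and $M_i = 1$ for every $i$. Since $C_{2n+1}$ is non-bipartite, every cut omits at least one edge, and in the $2n+1$ maximum cuts (each omitting exactly one edge) the two endpoints of the omitted edge receive utility $1/2$ while all other vertices receive utility $1$; since no cut achieves $f_S(v) = 1$ for every $v$, we have $\SFMP = 1/2$. By Proposition~\ref{thm:nodeUtilPropToMaxcut}, $\MP = 2\MC(C_{2n+1})/(\Delta(G)|V|) = 2n/(2n+1)$, and the uniform distribution over the $2n+1$ maximum cuts assigns each vertex the same expected utility $1 - 1/(2n+1)$, so $\DFMP = 2n/(2n+1)$. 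Then $\DFMP - \SFMP = (2n-1)/(2(2n+1)) \to 1/2$, which exceeds $\frac{1}{2} - \varepsilon$ once $n$ is sufficiently large.

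For the upper bound, the definition \eqref{eqn:node-utility-def} gives $f_x(V_i) \leq \frac{1}{\Delta(G)}\sum_{v \in V_i}\deg(v) = |V_i|M_i$ for every cut $x$, and taking expectation under any distribution and then the minimum over $i$ yields $\DFMP \leq \min_i M_i$. The heart of the argument is the lower bound $\SFMP \geq \frac{1}{2}\min_i M_i$. Since flipping a single vertex $v$ across the partition reverses the cut status of exactly the edges incident to $v$, one obtains the identity $f_S(v) + f_{S \triangle \{v\}}(v) = \deg(v)/\Delta(G)$, and the change in the total cut size $\sum_e X_e$ under this flip equals $\deg(v) - 2f_S(v)\Delta(G)$. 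Starting from any cut, I repeatedly flip any vertex $v$ with $f_S(v) < \deg(v)/(2\Delta(G))$; each such flip strictly increases the cut size, so the process terminates (as the cut size is bounded by $|E|$) at a cut $S^*$ satisfying $f_{S^*}(v) \geq \deg(v)/(2\Delta(G))$ for every $v$. Summing this pointwise bound over $v \in V_i$ and dividing by $|V_i|$ yields $f_{S^*}(V_i)/|V_i| \geq M_i/2 \geq \frac{1}{2}\min_j M_j$ for every $i$, hence $\SFMP \geq \frac{1}{2}\min_i M_i$. Subtracting the two bounds gives the claim, and the second inequality $\frac{1}{2}\min_i M_i \leq \frac{1}{2}$ is immediate since $\deg(u) \leq \Delta(G)$.

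The main difficulty is the lower bound on $\SFMP$: a naive random-cut argument combined with a union bound over groups cannot preclude several groups from simultaneously falling below $M_i/2$, so one needs something stronger than a moment-based guarantee. The pointwise inequality $f_{S^*}(v) \geq \deg(v)/(2\Delta(G))$ at \emph{every} vertex---automatic at any local Max-Cut optimum---is precisely what lets the per-group bound hold for all groups at once, and it is what makes the factor $\frac{1}{2}$ tight against the odd-cycle construction.
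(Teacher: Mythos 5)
Your proposal is correct and follows essentially the same route as the paper: the same odd-cycle $C_{2n+1}$ with singleton groups and the uniform distribution over the $2n+1$ maximum cuts for the existence claim, and the same pair of bounds $\DFMP \leq \min_i M_i$ and $\SFMP \geq \frac{1}{2}\min_i M_i$ (the latter via a local-search cut in which every vertex has at least half its incident edges cut) for the upper bound. The only differences are cosmetic: you spell out the termination of the local search and compute $\DFMP(C_{2n+1})$ exactly rather than merely bounding it from below.
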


A proof sketch is presented below, and the full proof is in Appendix~\ref{sec:proofs_nodeUtilities}.

\begin{proof}[Proof sketch]
We begin with the first statement. Consider the $(2n+1)$-cycle $C_{2n+1}$ (as shown below for $n=2$) with $2n+1$ singleton vertex groups.


\begin{center}
\begin{tikzpicture}

\tikzstyle{vertex}=[draw, circle, color = black, text opacity = 1, inner sep = 1.4pt]

\node[vertex, magenta] (v0) at ({360/5 * 0}:1.3cm){0};
\node[vertex, dcyan] (v1) at ({360/5 * 1}:1.3cm){1};
\node[vertex, orange] (v2) at ({360/5 * 2}:1.3cm){2};
\node[vertex, dgreen] (v3) at ({360/5 * 3}:1.3cm){3};
\node[vertex] (v4) at ({360/5 * 4}:1.3cm){4};

\draw (v0) -- (v1);
\draw (v1) -- (v2);
\draw (v2) -- (v3);
\draw (v3) -- (v4);
\draw (v4) -- (v0);

\end{tikzpicture}
\end{center}

Since the cycle is odd, the static fair maximum proportion is $1/2$. To bound the $\DFMP$, note that for each edge $e$, there is a maximum cut $S_e$ which does not cut $e$. By choosing one of these cuts uniformly at random, we can bound $\min_i \E[f_{S_e}(V_i)]/|V_i|$ by $1-o(1)$, implying that $\DFMP-\SFMP \to 1/2$ as $n \to \infty$. 

Tightness follows from the two bounds below, which we prove in the appendix:
\begin{align}
\DFMP(G,\Gamma) &\leq \min_i \frac{1}{|V_i|} \sum_{u \in V_i} \frac{\deg(u)}{\Delta(G)} \label{eqn:dfmp-upper-bound-body} \\
\SFMP(G,\Gamma ) &\geq \min_i \frac{1}{|V_i|} \sum_{u \in V_i} \frac{\deg(u)}{2\Delta(G)} \label{eqn:sfmp-lower-bound-body}
\end{align}
The top bound (\ref{eqn:dfmp-upper-bound-body}) simply comes from the observation that number number of edges incident to $u$ that span a cut $S$ is at most $\deg u$. The second bound (\ref{eqn:sfmp-lower-bound-body}) comes from the observation that the local search algorithm for Max-Cut results in a cut $S$ for which $f_S(V_i) \geq \sum_{u \in V_i} \frac{\deg u}{2\Delta(G)}$. 
\end{proof}

Finally, we translate the subproblem bound on Section~\ref{sec:generalBounds} to a subgraph bound for Max-Cut with node utilities. Unline the subgraph bound for edge utilities, the bound for node utilities has an additive error term due to the edges spanning multiple vertex groups.

\begin{corollary} \label{cor:subgraph-node}
Let $G=(V,E)$ be a graph with node partition $\Gamma = \{V_1,\ldots,V_{\ng}\}$, and $H$ a vertex-induced subgraph of $G$ whose vertex set is $\bigcup_{j=1}^{\ng'} V_{s_j}$, where $s_j \in [\ng]$ for all $j$. Then, 
\[
\DFMP(G,\Gamma) \leq \MP(H) + \frac{1}{|V(H)|}\sum_{j=1}^{\ng'} e(V_{s_j}, \overline{V(H)})~,
\]
where $e(V_{s_j}, \overline{V(H)})$ is the number of edges with one endpoint in $V_{s_j}$ and another in $V \setminus V(H)$.
\end{corollary}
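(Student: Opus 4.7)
The plan is to instantiate Theorem~\ref{thm:subproblem-bounds} with the subproblem chosen as Max-Cut with node utilities on $H$, using the node partition $\Gamma' = \{V_{s_1},\ldots,V_{s_{\ng'}}\}$ inherited from $\Gamma$. Since $V(H) = \bigsqcup_{j} V_{s_j}$, this is indeed a valid partition of the subproblem's ground set with $\Gamma' \subseteq \Gamma$. For each cut $x \in \mathcal{X}$, I define the subproblem utility by $f'_x(v) = \frac{1}{\Delta(H)} \sum_{u \in N_H(v)} X_{uv}$ for $v \in V(H)$ and extend additively to subsets of $V(H)$. This is exactly the node-utility function on $H$, so the resulting maximum proportion of the subproblem coincides with $\MP(H)$, and since additive set functions are (trivially) superadditive and nonnegative, the hypotheses of Theorem~\ref{thm:subproblem-bounds} are met.

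The key technical step is to identify a slack $\delta(V_{s_i})$ satisfying $f'_x(V_{s_i}) + \delta(V_{s_i}) \geq f_x(V_{s_i})$ for every cut $x$. Since $H$ is vertex-induced, I can write $N_G(v) = N_H(v) \sqcup (N_G(v)\setminus V(H))$ for each $v \in V(H)$, which yields the decomposition
\[
f_x(V_{s_i}) = \frac{1}{\Delta(G)}\sum_{v \in V_{s_i}} \sum_{u \in N_H(v)} X_{uv} + \frac{1}{\Delta(G)}\sum_{v \in V_{s_i}} \sum_{u \in N_G(v)\setminus V(H)} X_{uv}.
\]
Because $\Delta(H) \leq \Delta(G)$, the first summand is bounded above by $f'_x(V_{s_i})$. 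For the second summand, bounding each indicator by $X_{uv} \leq 1$ gives $\frac{1}{\Delta(G)} e(V_{s_i}, \overline{V(H)}) \leq e(V_{s_i}, \overline{V(H)})$. Thus taking $\delta(V_{s_i}) = e(V_{s_i}, \overline{V(H)})$ suffices.

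Finally, I plug these choices into Theorem~\ref{thm:subproblem-bounds} and use $\sum_i |V_{s_i}| = |V(H)|$ to obtain the stated inequality. I expect the main subtlety to be bookkeeping around the two different normalizations ($\Delta(G)$ in $f_x$ versus $\Delta(H)$ in $f'_x$); happily, the monotonicity $\Delta(H) \leq \Delta(G)$ points in the favorable direction, so the only additional contribution to $\delta$ comes from the boundary edges between each $V_{s_i}$ and $V \setminus V(H)$, and no further correction is required.
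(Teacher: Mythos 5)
Your proposal is correct and follows essentially the same route as the paper: instantiate Theorem~\ref{thm:subproblem-bounds} with the node-utility functions of the induced subgraph $H$ and take $\delta(V_{s_j}) = e(V_{s_j}, \overline{V(H)})$ to absorb the boundary edges. If anything, you are slightly more careful than the paper's proof, which writes $f'_S$ without the $1/\Delta(H)$ normalization; your explicit use of $\Delta(H) \leq \Delta(G)$ to compare the two normalizations is exactly the right bookkeeping.
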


\begin{proof} 
Let $f'_S(U) = \sum_{u \in U} \sum_{w \in N_H(u)} X_{uw}$ for every cut $S$ and every subset of nodes $U$, where $N_H(u)$ is the set of neighbors of $u$ in $H$. Then 
\[
f'_S(V_{s_j}) + e(V_{s_g}, \overline{V(H)}) \geq f_S(V_{s_j})
\]
for every $S \subseteq V$ and every $j \in [\ng']$. The claim thus follows from Theorem~\ref{thm:subproblem-bounds}.
\end{proof}

\subsubsection{Algorithms}

In this section, we discuss two approaches to the fair Max-Cut problem with node utilities. Of course, any Max-Cut algorithm can be used 

We can do an analysis similar to that in Section \ref{sec:naiveRandomCut} to see how this vertex notion of Max-Cut fairness behaves with respect to the naive random algorithm. Let $S$ be the (random) cut obtained by the naive random algorithm, then we have the following for any group $V_i$:
\begin{align*}
    \mathbb{E} \left[\frac{f_S(V_i)}{|V_i|} \right] 
    &= \mathbb{E}\bigg[\frac{1}{|V_i|\Delta(G)} \sum_{v \in V_i} \sum_{u \in N(v)} X_{uv}\bigg] 
    =\frac{1}{2|V_i|\Delta(G)} \sum_{v \in V_i} \text{deg}(v). 
\end{align*}

If $G$ is $k$-regular, then $\mathbb{E}[f_S(V_i)/|V_i|] = \frac{1}{2|V_i|\Delta(G)} \sum_{v \in V_i} \text{deg}(v) = 1/2$. Otherwise,
\begin{align*}
    \frac{1}{2\Delta(G)} \bigg( \frac{1}{|V_i|}\sum_{v \in V_i} \text{deg}_{G[V_i]}(v)\bigg) \leq \mathbb{E} \left[\frac{f_S(V_i)}{|V_i|} \right] 
    \leq \frac{1}{2\Delta(G)} \bigg( \frac{1}{|V_i|}\sum_{v \in V_i} \text{deg} (v)\bigg),
\end{align*}
where $\deg_{G[V_i]}(v)$ is the degree of $v$ in the induced subgraph $G[V_i]$. Thus, for an arbitrary graph $G$, the naive random cut yields an expected proportion for group $V_i$ which is sandwiched between $1/2\Delta(G)$ times the average degree in $G[V_i]$ and $1/2\Delta(G)$ times the average degree of $V_i$ in $G$. Here, we can see that an artifact of using node utilities is that groups with low degree (compared to $\Delta(G)$) will incur lower utility.

As with the case of edge utility, one can find $\DFMP(G,\Gamma)$ exactly using a linear program. In particular, letting $p : 2^V \to [0,1]$ denote the distribution over cuts, $\DFMP(G,\Gamma)$ is
\begin{align*}
    \text{max} \ & z \\
    \text{s.t.} \  & z \leq \sum_{S \subseteq V}\bigg( \frac{\sum_{u \in V_i} \sum_{w \in N(u)} X_{uw}}{|V_i|\Delta(G)} \bigg) p_S  & \forall i \in [\ng] \\
    & \sum_{S \subseteq V} p_S = 1\\
    & p_S \geq 0 & \forall S \subseteq V. 
\end{align*}
While this method is inefficient, it will produce the optimal distribution over cuts.


\section{Discussion}
\label{sec:discussion}
In this paper, we introduced a general class of combinatorial optimization problems which we framed as offline problems. That is, we presented the problems of Section~\ref{sec:notation} as one-shot optimization problems, in which a single decision is made, and a single utility is incurred. However, in some settings, decisions must be made iteratively over time. In such settings, employing the static-fair optimal solution at each iteration will result in a maximin objective of $\SFMP$ (or $\SFMV$). Another approach, however, would be to sample from a distribution independently at each iteration. Doing so with the dynamic-fair optimal solution will result in an expected maximin objective of $\DFMP$ (or $\DFMV$), which is an improvement over the static fair solution. Moreover, varying decisions over time (such as in the latter approach) can be used to ensure that one group isn't routinely disadvantaged. Thus, finding good distributions over solutions is closely related to the concept of {\it fairness over time}, which has gained traction in the fairness community. These temporal notions of fairness constrain sequences of decisions (e.g., requiring that different groups incur similar sliding-window-average utilities). We leave the question of combinatorial optimization under temporal maximin constraints as an open question. 

The decision set for our dynamic fair optimization problems is the set of distributions over $\mathcal{X}$, and the decision set for our static fair optimization problems is the set of vertices of this simplex. Thus, the former decision set can be viewed as a set of {\it fractional solutions}, and the latter can be viewed as a set of {\it integral solutions}. The tight bounds on the gaps between the dynamic fair and static fair optima can thus be interpreted as {\it integrality gaps} for their respective optimization problems. In general, such integrality gaps have not been quantified in maximin-fair combinatorial optimization problems. We leave the characterization of these integrality gaps as an open question.


The algorithms we have considered in this work are either known algorithms (used for the non-fair versions of the problems), naive meta-algorithms constructed from such already-known algorithms, or exact solvers via linear programming that require an exponential number of variables. Clearly, for \SFMP{} or \DFMP{} problems, more sophisticated algorithms are needed if one hopes to have better performance guarantees in practice. 
We believe that this is an interesting direction for future work. In particular, \emph{iterative} approaches---which may be combinatorial in nature or based on tuning parameters of an algorithm---may be fruitful.

Next, we turn to the field of quantum computing. Determining for which problems we can expect current or future quantum computers to have some sort of advantage over classical computers (either theoretically or in practice) is still an open and active area of research. From a theoretical perspective, assuming the Unique Games Conjecture and certain widely-believed complexity theoretic assumptions, problems such as the (vanilla) Max-Cut problem are not a good candidate since no classical or quantum algorithm can beat the classical Goemans-Williamson algorithm in terms of approximation ratio \cite{khot2007optimal}. 
We believe that the \DFMP{} problem associated with Max-Cut, i.e., ``dynamic fair Max-Cut,'' is a promising candidate since we are shifting our setting from single solutions to \emph{distributions} of solutions. It is well-known that quantum circuits have the ability to produce distributions of bitstrings that are difficult for classical algorithms to produce; in particular, this result holds for the Quantum Approximate Optimization Algorithm \cite{farhi2014quantum,farhi2016quantum} which is a general quantum algorithm used for solving problems in combinatorial optimization.

\section{Acknowledgments}
\label{section:acknowledgments}
This work was supported by the U.S. Department of Energy through the Los Alamos National Laboratory. Los Alamos National Laboratory is operated by Triad National Security, LLC, for the National Nuclear Security Administration of U.S. Department of Energy (Contract No. 89233218CNA000001). The research presented in this article was supported by the Laboratory Directed Research and Development program of Los Alamos National Laboratory under project number 20230049DR as well as by the NNSA's Advanced Simulation and Computing Beyond Moore's Law Program at Los Alamos National Laboratory. Report Number: LA-UR-24-30325.  



\bibliography{lipics-v2021-sample-article}

\appendix

\section{The naive random cut}
\label{sec:naiveRandomCut}
Let $\Gamma =\{E_1,\ldots,E_\gamma\}$ be an edge partition, where $E_i = \{e^i_j : j=1,\ldots,|E_i|\}$ and $X_j^i$ is the indicator of the $j$th edge in $E_i$ crossing the cut. Consider the algorithm that places each vertex uniformly at random on either side of the cut. Let $S$ be this cut. In this case,
\[
\mathbb{E} \frac{f_S(E_i)}{|E_i|} = \frac{1}{|E_i|} \sum_{e \in E_i}^{|E_i|} \mathbb{E}X^i_j = \frac{1}{2}.
\]
To find the variance of this proportion, we must first find $\mathbb{E} [X^i_{j_1} X^i_{j_2}]$ for $1 \leq j_1 \ne j_2 \leq |E_i|$. If $e^i_{j_1}$ and $e^i_{j_2}$ are disjoint edges, then clearly $X^i_{j_1}$ and $X^i_{j_2}$ are independent, and $\mathbb{E} [X^i_{j_1} X^i_{j_2}] = \mathbb{E} [X^i_{j_1}] \mathbb{E}[X^i_{j_2}] = 1/4$. If they are not disjoint, then they share a vertex, as shown below.

\begin{center}
\begin{tikzpicture}

\tikzstyle{vertex}=[draw, circle, color = black, fill = black, text opacity = 1, inner sep = 1.4pt]

\node[vertex] (v1) at (0,0) {};
\node[vertex] (v2) at (2.5,1) {};
\node[vertex] (v3) at (2.5,-1) {};

\draw (v1) -- node[inner sep=2pt, fill=white] {$e^i_{j_1}$} (v2);
\draw (v1) -- node[inner sep=2pt, fill=white] {$e^i_{j_2}$} (v3);

\end{tikzpicture}
\end{center}
In this case, both edges span the cut if and only if the right two vertices are on the opposite side of the left vertex, which happens with probability 1/4. Thus, in this case, $\mathbb{E}[X^i_{j_1} X^i_{j_2}] = 1/4$ as well. So, regardless of which edges $e^i_{j_1}$, $e^i_{j_2}$ are considered, we have that
\[
\mathbb{E}[X^i_{j_1} X^i_{j_2}] = \mathbb{E} [X^i_{j_1}] \mathbb{E}[X^i_{j_2}] = \frac{1}{4}.
\]
Hence,
\begin{align*}
    \operatorname{Var}\bigg(\frac{f_S(E_i)}{|E_i|}\bigg) &= \frac{1}{|E_i|^2} \bigg[ \sum_{j=1}^{|E_i|} \operatorname{Var}(X^i_j) \bigg] \\
    &= \frac{1}{|E_i|^2} \bigg[ \sum_{j=1}^{|E_i|} \mathbb{E}\Big[\Big(X^i_j-\frac{1}{2}\Big)^2\Big] \bigg] \\
    &= \frac{1}{|E_i|^2} \bigg[ \sum_{j=1}^{|E_i|} \Big(\mathbb{E}[(X^i_j)^2] - \mathbb{E}[X^i_j] + \frac{1}{4} \Big) \bigg] \\
    &= \frac{1}{4|E_i|}.
\end{align*}
So, while the mean value of $\frac{f_S(E_i)}{|E_i|}$ is consistent across groups, the variance decays with the group size. Smaller groups are therefore more likely to be subject to low $\frac{f_S(E_i)}{|E_i|}$ values.

\section{Proof of Theorem \ref{thm:subproblem-bounds}  }
\label{sec:proofs_section_3}
\begin{proof}[Proof of Theorem \ref{thm:subproblem-bounds}]
    First, let $D \in \argmax_{D \in \mathcal{D}} \min_{i \in [\ng]} \E_{x \sim D} f_x(\gs_i)$. Then,
\begin{align*}
    \MV' + \sum_{i=1}^{\ng'} \delta(\gs'_i) &= \max_{x \in \mathcal{X}} f_x'(\gs') + \sum_{i=1}^{\ng'} \delta(\gs'_i) \\
    &\geq \E_{x \sim D} f_x'(\gs') + \sum_{i=1}^{\ng'} \delta(\gs'_i) \\
    &\geq \E_{x \sim D} \sum_{i=1}^{\ng'} f_x'(\gs'_i) + \sum_{i=1}^{\ng'} \delta(\gs'_i) &\mbox{by superadditivity} \\
    &= \E_{x \sim D} \sum_{i=1}^{\ng'} f_x'(\gs'_i) + \delta(\gs'_i)   \\
    &\geq \E_{x \sim D} \sum_{i=1}^{\ng'} f_x(\gs'_i)  \\
    &= \sum_{i=1}^{\ng'} \E_{x \sim D} f_x(\gs'_i)   \\
    &\geq \min_{i \in [\ng']} \E_{x \sim D} f_x(\gs_i') &\mbox{by nonnegativity} \\
    &\geq \min_{i \in [\ng]} \E_{x \sim D} f_x(\gs_i) \\
    &= \DFMV &\mbox{since $\displaystyle D \in \argmax_{D \in \mathcal{D}} \min_{i \in [\ng]} \E_{x \sim D} f_x(\gs_i)$.}
\end{align*}

    Now, let $D \in \argmax_{D \in \mathcal{D}} \min_{i \in [\ng]} \E_{x \sim D} f_x(\gs_i)/|\gs_i|$. Then,
    \begin{align*}
    \MP' + \frac{\sum_{i=1}^{\ng'} \delta(\gs'_i)}{\sum_{i=1}^{\ng'} |\gs'_i|} &= \max_{x \in \mathcal{X}} \frac{f_x'(\gs')}{|\gs'|} + \frac{\sum_{i=1}^{\ng'} \delta(\gs'_i)}{\sum_{i=1}^{\ng'} |\gs'_i|} \\
    &\geq \E_{x \sim D} \frac{f_x'(\gs')}{|\gs'|} + \frac{\sum_{i=1}^{\ng'} \delta(\gs'_i)}{\sum_{i=1}^{\ng'} |\gs'_i|} \\
    &\geq \E_{x \sim D} \frac{\sum_{i=1}^{\ng'} f_x'(\gs'_i)}{\sum_{i=1}^{\ng'} |\gs'_i|} + \frac{\sum_{i=1}^{\ng'} \delta(\gs'_i)}{\sum_{i=1}^{\ng'} |\gs'_i|} &\mbox{by superadditivity} \\
    &= \E_{x \sim D} \frac{\sum_{i=1}^{\ng'} f_x'(\gs'_i) + \delta(\gs'_i)}{\sum_{i=1}^{\ng'} |\gs'_i|}   \\
    &\geq \E_{x \sim D} \frac{\sum_{i=1}^{\ng'} f_x(\gs'_i)}{\sum_{i=1}^{\ng'} |\gs'_i|}  \\
    &= \frac{\sum_{i=1}^{\ng'} \E_{x \sim D} f_x(\gs'_i)}{\sum_{i=1}^{\ng'} |\gs'_i|}   \\
    &\geq \min_{i \in [\ng']} \frac{\E_{x \sim D} f_x(\gs_i')}{|\gs'_i|} &\mbox{by nonnegativity} \\
    &\geq \min_{i \in [\ng]} \frac{\E_{x \sim D} f_x(\gs_i)}{|\gs_i|} \\
    &= \DFMP &\mbox{since $\displaystyle D \in \argmax_{D \in \mathcal{D}} \min_{i \in [\ng]} \E_{x \sim D} \frac{f_x(\gs_i)}{|\gs_i|}$.}
\end{align*}

\end{proof}
\section{Examples and Proofs from Section~\ref{sec:edgeUtilities}}
\label{sec:proofs_edgeUtilities}
\begin{example}[Example showing that $\DFMP(G) <  \min_H \text{MP}(H)$ is possible]
\label{ex:subgraphBound}
 Here, we show an example of a graph for which $\DFMP(G) <  \min_H \text{MP}(H)$, where the minimum is taken over all $H$ that satisfy the conditions of the above theorem. This example uses the minimum number of vertices and the minimum number of edge groups.

\begin{center}
\begin{tikzpicture}

\tikzstyle{vertex}=[draw, circle, color = black, text opacity = 1, inner sep = 1.4pt]

\node at (-.8,-1) {$G = $};

\node[vertex] (v1) at (0,0) {1};
\node[vertex] (v2) at (2,0) {2};
\node[vertex] (v3) at (0,-2) {3};
\node[vertex] (v4) at (2,-2) {4};

\draw (v1) -- (v2);
\draw (v1) -- (v3);
\draw[magenta] (v1) -- (v4);
\draw (v2) -- (v4);
\draw (v3) -- (v4);

\node at (5.5,-1) {
\begin{tabular}{l}
    $E_1 = \{(1,2), (1,3), (2,4), (3,4)\}$ \\
    $E_2 = \{(1,4)\}$
\end{tabular}
};

\end{tikzpicture}
\end{center}

The maximum proportions of every subgraph $H$ of $G$ are listed in the table below.

\begin{center}
    \begin{tabular}{l|l}
        $H$ & $\MP(H)$  \\ \hline 
        $E_1$ & 1 \\
        $E_2$ & 1 \\
        $E_1 \cup E_2$ & 0.8 \\
    \end{tabular}
\end{center}

Thus, $\min_H \MP(H) = 0.8$, where the minimum is taken over all $H$ that satisfy the conditions of the above theorem. Next, we list the proportions for every group $E_i$ for every possible cut. Additionally, we list the probabilities given to each cut which maximize the minimum expected proportion of the groups.

\begin{center}
    \begin{tabular}{l||l|l|l|l}
        Cut $S$ & $f_S(E_1)/|E_1|$ & $f_S(E_2)/|E_2|$ & $\min_i f_S(E_i)/|E_i|$ & Probability  \\ \hline 
        $\{1\}$ & 0.5 & 1 & 0.5 & 0 \\
        $\{2\}$ & 0.5 & 0 & 0 & 0 \\
        $\{3\}$ & 0.5 & 0 & 0 & 0 \\
        $\{4\}$ & 0.5 & 1 & 0.5 & 2/3 \\
        $\{1, 2\}$ & 0.5 & 1 & 0.5 & 0 \\
        $\{1, 3\}$ & 0.5 & 1 & 0.5 & 0 \\
        $\{1, 4\}$ & 1 & 0 & 0 & 1/3 \\
    \end{tabular}
\end{center}

From this, we can see that $\DFMP(G) = \min \{2/3, 2/3\} = 2/3$, which is less than $\min_H \MP(H) = 0.8$. 
\end{example}

\begin{example}[GW Algorithm Gives Poor Dynamic Fairness Values]
\label{ex:GW_gives_bad_DF}
Consider the same graph $G$ and edge partition $\Gamma$ considered in Example \ref{ex:subgraphBound}; it was shown that $\DFMP(G,\Gamma)=2/3$. Recall that the any feasible solution $x$ to the GW SDP relaxation can be represented by $n$ $n$-dimensional unit vectors, one for each vertex. Using the GW SDP solver in the \texttt{cvxgraphalgs} Python package, we obtain the following solution to the GW SDP relaxation:
$$x_1 = \begin{bmatrix}0\\0\\0\\+1\end{bmatrix}, x_2 = \begin{bmatrix}0\\0\\0\\-1\end{bmatrix},x_3 = \begin{bmatrix}0\\0\\0\\-1\end{bmatrix},x_4 = \begin{bmatrix}0\\0\\0\\+1\end{bmatrix}.$$


For vertices $u$ and $v$, the angle $\theta_{uv}$ formed between the corresponding vectors $x_u$ and $x_v$ is $\arccos(u \cdot v)$. Goemans and Williamson showed that the probability of an edge $(u,v)$ being cut is given by $\theta_{uv}/\pi$. Since $E_2$ contains just the single edge $(1,4)$, then the proportion of edges cut in $E_2$ is $$\E_{x \sim D} \frac{f_x(E_2)}{|E_2|} = \Pr[(1,4) \text{ is cut}] = \arccos(x_1 \cdot x_4)/\pi = \arccos(1)/\pi = 0.$$
Letting $D$ be the distribution of cuts obtained from the GW algorithm on this example graph, it follows that $\min_{i \in [\ng]} \E_{x \sim D} \frac{f_x(E_i)}{|E_i|} = 0$.
\end{example}

\begin{example}[LP for $\DFMP$ with Edge Utilities]
\label{ex:lp_edge_utilities}
Let $G = (V,E)$ where $V = \{1,2,3,4\}$ and $E = \{(1,2), (2,3), (1,3), (1,4)\}$. Consider the following partition of the edges into $\ng=4$ groups where each group contains exactly one edge:
\[E_1 = \{(1,2)\},
~~E_2 = \{(2,3)\},
~~E_3 = \{(1,3)\},
~~E_4 = \{(1,4)\}.
\]
Expanding the constraints of the above general LP, we obtain the following LP:
\begin{align*}
    \text{max} \ & z \\
    \text{s.t.} \  & z \leq p_{1}+ p_{1,3} + p_{1,4} + p_{1,3,4} + p_{2,3,4} + p_{2,4} + p_{2,3} + p_{2} \\
    & z \leq p_{2}+ p_{1,2} + p_{2,4} + p_{1,2,4} + p_{1,3,4} + p_{3,4} + p_{1,3} + p_{3} \\
    & z \leq p_{1}+ p_{1,2} + p_{1,4} + p_{1,2,4} + p_{2,3,4} + p_{3,4} + p_{2,3} + p_{3} \\
    & z \leq p_{1}+ p_{1,2} + p_{1,3} + p_{1,2,3} + p_{2,3,4} + p_{3,4} + p_{2,4} + p_{4}\\
    & \sum_{S \subseteq V} p_S = 1\\
    & p_S \geq 0 & \forall S \subseteq V. 
\end{align*}

{\it Aside.} Solving the above linear program yields $z^* = 2/3 = \DFMP$, whereas for this graph, we have $\MP = 3/4$.
\end{example}

\begin{proof}[Proof of Proposition \ref{prop:mp-dfmp-edge} ]
Let $G = (V,E)$ be a graph with edge groups $E_1,\ldots,E_\ng$. We begin by proving the upper bound. To that end, let $D^*$ denote the distribution of the naive random cut.\footnote{The naive random cut is the cut $S$ generated by including each vertex independent in $S$ with probability 1/2. See Section~\ref{subsubsec:edge-algorithms} for further discussion.} Then
\[ 
\DFMP = \max_{D \in \mathcal{D}} \min_i \E_{S \sim D} \frac{f_S(E_i)}{|E_i|} \geq \min_i \E_{S \sim D^*} \frac{f_S(E_i)}{|E_i|} = \min_i \frac{1}{2} = \frac{1}{2}.
\]
So, we have that
\[
\MP - \DFMP \leq 1 - \DFMP \leq 1-\frac{1}{2} = \frac{1}{2}.
\]
This proves the upper bound.

We prove tightness using a collection of examples. In particular, consider the graph $G_{k,n}$, which is a copy of $K_{2k}$ with a tail of length $n-2k$ extending from one of its vertices. The edge partition $\Gamma_{k,n}$ for $G_{k,n}$ will consist of two groups: the edges of $K_{2k}$ and the edges of the tail. For example, $G_{2,n}$ is shown below.

\begin{center}
\begin{tikzpicture}

\tikzstyle{vertex}=[draw, circle, color = black, text opacity = 1, inner sep = 1.4pt]

\node[vertex] (v1) at (-1.3,1) {1};
\node[vertex] (v3) at (-1.3,-1) {3};
\node[vertex] (v2) at (-2.6,0) {2};
\node[vertex] (v4) at (0,0) {4};
\node[vertex] (v5) at (1.5,0) {5};
\node[vertex] (v6) at (3,0) {6};
\node[vertex] (vn) at (6,0) {$n$};

\draw (v1) -- (v2);
\draw (v1) -- (v3);
\draw (v1) -- (v4);
\draw (v2) -- (v3);
\draw (v2) -- (v4);
\draw (v3) -- (v4);
\draw[magenta] (v4) -- (v5);
\draw[magenta] (v5) -- (v6);
\draw[magenta, dotted] (v6) -- (vn);

\draw[magenta] (v6) -- (3.8,0);
\draw[magenta] (5.2,0) -- (vn);

\node at (9,0) {
\begin{tabular}{l}
    $E_1 = \{(1,2), (1,3), (1,4),$ \\
    $~~~~~~~~~(2,3), (2,4), (3,4)\}$ \\
    $E_2 = E \setminus E_1$
\end{tabular}
};

\end{tikzpicture}
\end{center}

Note that the number of nodes in $G_{k,n}$ is $n$ and the number of edges is $n-2k + \binom{2k}{2}$. In this case, 
\[
\MP(G_{k,n}) = \frac{k^2 + n-2k}{\binom{2k}{2} + n-2k} \xrightarrow{~n \to \infty~} 1,
\]
and $\DFMP(G_{k,n}, \Gamma_{k,n}) = \frac{k^2}{\binom{2k}{2}} = \frac{k}{2k-1}$. Thus, for a fixed $k$,
\[
\MP(G_{k,n}) - \DFMP(G_{k,n}, \Gamma_{k,n}) \xrightarrow{~n \to \infty~} 1 - \frac{k}{2k-1} = \frac{k-1}{2k-1}.
\]
Thus, by choosing $k$ to be large enough, one can obtain a graph $G$ and edge partition $\Gamma$ for which $\MP(G) - \DFMP(G, \Gamma)$ is arbitrarily close to 1/2. 
\end{proof}

\begin{proof}[Proof of Proposition \ref{prop:dfmp-sfmp-edge}]
Consider the $(2n+1)$-cycle $C_{2n+1}$ with $2n+1$ singleton edge sets 
\[
E_1=\{(0,1)\}, ~E_2= \{(1, 2)\}, \ldots, ~E_{2n+1}=\{(2n, 0)\}.
\]
This graph is depicted below for $n=2$.

\begin{center}
\begin{tikzpicture}

\tikzstyle{vertex}=[draw, circle, color = black, fill = white, text opacity = 1, inner sep = 1.4pt]

\foreach \s in {0,...,4}{
    \node[vertex] (v\s) at ({360/5 * \s}:1.3cm){\s};
}

\draw[magenta] (v0) -- (v1);
\draw[cyan] (v1) -- (v2);
\draw[orange] (v2) -- (v3);
\draw[green] (v3) -- (v4);
\draw (v4) -- (v0);

\end{tikzpicture}
\end{center}

In this case, since $2n+1$ is odd, $\SFMP(C_{2n+1}) = 0$. Now suppose that the cut which cuts all edges except $(k,k+1 \pmod{2n+1})$ is chosen with probability $1/(2n+1)$. In this case,
\[
\E_S \frac{f_S(E_k)}{|E_k|} = 1-\frac{1}{2n+1} ~~\mbox{for}~k=1,\ldots,2n+1.
\]
It follows that $\DFMP(C_{2n+1}) \geq 1-1/(2n+1)$. Letting $n \to \infty$, we see that $\DFMP(C_{2n+1}) - \SFMP(C_{2n+1}) \to 1$.
\end{proof}

\section{Proofs from Section~\ref{sec:nodeUtilities}}
\label{sec:proofs_nodeUtilities}




\begin{proof}[Proof of Proposition \ref{prop:mp-dfmp-node}]
Consider the graph $G_{k,r}$ ($G_{2,3}$ is depicted below) which is a $2k$-cycle connected by a single edge to a $K_{r,r}$. Let the partition be $\Gamma_{k,r} = \{V_1,V_2\}$, where nodes of the cycle comprise $V_1$, and the nodes of the complete bipartite graph comprise $V_2$.

\begin{center}
\begin{tikzpicture}

\tikzstyle{vertex}=[draw, circle, color = black, text opacity = 1, inner sep = 1.4pt]

\node[vertex, magenta] (v0) at (-1,1) {0};
\node[vertex, magenta] (v1) at (-2,0) {1};
\node[vertex, magenta] (v2) at (-1,-1) {2};
\node[vertex, magenta] (v3) at (0,0) {3};
\node[vertex] (v4) at (2,1.5) {4};
\node[vertex] (v5) at (2,0) {5};
\node[vertex] (v6) at (2,-1.5) {6};
\node[vertex] (v7) at (4,-1.5) {7};
\node[vertex] (v8) at (4,0) {8};
\node[vertex] (v9) at (4,1.5) {9};

\draw (v0) -- (v1);
\draw (v1) -- (v2);
\draw (v2) -- (v3);
\draw (v3) -- (v0);
\draw (v3) -- (v5);
\draw (v4) -- (v7);
\draw (v4) -- (v8);
\draw (v4) -- (v9);
\draw (v5) -- (v7);
\draw (v5) -- (v8);
\draw (v5) -- (v9);
\draw (v6) -- (v7);
\draw (v6) -- (v8);
\draw (v6) -- (v9);

\node at (6,0) {
\begin{tabular}{l}
    $V_1 = \{0, 1, 2, 3\}$ \\
    $V_2 = V \setminus V_1$
\end{tabular}
};

\end{tikzpicture}
\end{center}

In this case, for any cut $S$, we have that
\[
\frac{f_S(V_1)}{|V_1|} \leq \frac{1}{|V_1|} \sum_{u \in V_1} \frac{\deg u}{\Delta(G_{k,r})} \leq \frac{1}{2k} \sum_{u \in V_1} \frac{3}{r+1} = \frac{3}{r+1} \xrightarrow{r \to \infty} 0.
\]
It follows that $\DFMP(G_{k,r},\Gamma_{k,r}) \to 0$ as $r \to \infty$. At the same time, we have that
\begin{align*}
    \MP(G_{k,r}) = \frac{2 \MC(G_{k,r})}{|V| \Delta(G_{k,r})} = \frac{2(r^2 + 2k + 1)}{(2r + 2k)(r+1)} \xrightarrow{r \to \infty} 1.
\end{align*}
By choosing $r$ large enough, we can ensure that $\MP(G_{k,r}) - \DFMP(G_{k,r},\Gamma_{k,r}) > 1-\varepsilon$.
\end{proof}

\begin{proof}[Proof of Proposition~\ref{prop:dfmp-minus-sfmp-node}]
We begin with the first statement. Consider the $(2n+1)$-cycle $C_{2n+1}$ with a node partition $\Gamma_{2n+1}$ containing $2n+1$ singleton vertex sets $V_i = \{i-1\}$ for $i=1,\ldots,2n+1$. This graph is depicted below for $n=2$.

\begin{center}
\begin{tikzpicture}

\tikzstyle{vertex}=[draw, circle, color = black, text opacity = 1, inner sep = 1.4pt]

\node[vertex, magenta] (v0) at ({360/5 * 0}:1.3cm){0};
\node[vertex, cyan] (v1) at ({360/5 * 1}:1.3cm){1};
\node[vertex, orange] (v2) at ({360/5 * 2}:1.3cm){2};
\node[vertex, green] (v3) at ({360/5 * 3}:1.3cm){3};
\node[vertex] (v4) at ({360/5 * 4}:1.3cm){4};

\draw (v0) -- (v1);
\draw (v1) -- (v2);
\draw (v2) -- (v3);
\draw (v3) -- (v4);
\draw (v4) -- (v0);

\end{tikzpicture}
\end{center}

In this case, since $2n+1$ is odd, $\SFMP(C_{2n+1},\Gamma_{2n+1}) = 1/2$. Now suppose that the cut which cuts all edges except $(k,k+1 \pmod{2n+1})$ is chosen with probability $1/(2n+1)$. In this case,
\[
\E_S f_S(V_k) = 1 \cdot \bigg(1-\frac{2}{2n+1}\bigg) + \frac{1}{2} \cdot \bigg( \frac{2}{2n+1}\bigg)~~\mbox{for}~k=1,\ldots,2n+1.
\]
It follows that $\DFMP(C_{2n+1},\Gamma_{2n+1}) \geq 1-2/(2n+1)$. Letting $n \to \infty$, we see that $\DFMP(C_{2n+1},\Gamma_{2n+1}) - \SFMP(C_{2n+1},\Gamma_{2n+1}) \to 1/2$.

Next, we show tightness. Observe that for any graph $G$, any vertex partition $\Gamma=\{V_1,\ldots,V_m\}$, and any cut $S$, we have that
\[
\frac{f_S(V_i)}{|V_i|}  \leq \frac{1}{|V_i|} \sum_{u \in V_i} \frac{\deg(u)}{\Delta(G)} ~~\mbox{for all $i \in [m]$}.
\]
It follows that 
\begin{equation} \label{eqn:dfmp-upper-bound}
\DFMP(G,\Gamma) \leq \min_i \frac{1}{|V_i|} \sum_{u \in V_i} \frac{\deg(u)}{\Delta(G)}.
\end{equation}
To lower bound $\SFMP(G,\Gamma)$, let $S$ be a cut resulting from the local search algorithm. In this case, the number of edges incident to $u$ that cross the cut $S$ is at least $\deg(u)/2$ for all vertices $u \in V$. It follows that
\[
\frac{f_S(V_i)}{|V_i|}  \geq \frac{1}{|V_i|} \sum_{u \in V_i} \frac{\deg(u)}{2\Delta(G)} ~~\mbox{for all $i \in [m]$}.
\]
So,
\begin{equation} \label{eqn:sfmp-lower-bound}
\SFMP(G,\Gamma ) \geq \min_i \frac{1}{|V_i|} \sum_{u \in V_i} \frac{\deg(u)}{2\Delta(G)}.
\end{equation}
In summation, we have that
\begin{align*}
\DFMP(G,\Gamma) - \SFMP(G,\Gamma) &\leq  
\frac{1}{2} \bigg( \min_i \frac{1}{|V_i|} \sum_{u \in V_i} \frac{\deg(u)}{\Delta(G)} \bigg) &\mbox{by (\ref{eqn:dfmp-upper-bound}) and (\ref{eqn:sfmp-lower-bound})} \\
&\leq \frac{1}{2}.
\end{align*}
\end{proof}

\end{document}